\newtheorem{proposition}{Proposition}
\newtheorem{definition}{Definition}
\newcounter{bean}
\title[Sparse Covariance Logit Mixture]{Sparse Covariance Estimation in Logit Mixture Models}
\author[Aboutaleb et al.]{Youssef M. Aboutaleb$^{\dagger}$, Mazen Danaf$^{\dagger}$, Yifei Xie$^{\dagger}$, Moshe E. Ben-Akiva $^{\dagger}$}
\address{$^{\dagger}$ Massachusetts Institute of Technology,
            77 Massachusetts Avenue, Cambridge, MA 02139, USA.}
\email{ymedhat@mit.edu, mdanaf@mit.edu, yifeix@mit.edu, mba@mit.edu}
\def\AmSTeX{$\cal A$\kern-.1667em\lower.5ex\hbox{$\cal M$}\kern-.125em
    $\cal S$-\TeX}
\def\BibTeX{{\rm B\kern-.05em{\sc i\kern-.025em b}\kern-.08em
    T\kern-.1667em\lower.7ex\hbox{E}\kern-.125emX}}
\begin{document}
  \begin{abstract}
This paper introduces a new data-driven methodology for estimating sparse covariance matrices of the random coefficients in logit mixture models. Researchers typically specify covariance matrices in logit mixture models under one of two extreme assumptions: either an unrestricted full covariance matrix (allowing correlations between all random coefficients), or a restricted diagonal matrix (allowing no correlations at all). Our objective is to find optimal \textit{subsets} of correlated coefficients for which we estimate covariances. We propose a new estimator, called MISC, that uses a mixed-integer optimization (MIO) program to find an optimal block diagonal structure specification for the covariance matrix, corresponding to subsets of correlated coefficients, for any desired sparsity level using Markov Chain Monte Carlo (MCMC) posterior draws from the unrestricted full covariance matrix. The optimal sparsity level of the covariance matrix is determined using out-of-sample validation. We demonstrate the ability of MISC to correctly recover the true covariance structure from synthetic data. In an empirical illustration using a stated preference survey on modes of transportation, we use MISC to obtain a sparse covariance matrix indicating how preferences for attributes are related to one another. 
% keywords can be removed
\keywords{Discrete Choice, Logit Mixture Models, Sparse Covariance Estimation, Algorithmic Model Selection, Machine Learning}

    \end{abstract}

    % main body

 \section{Introduction}
 \setcounter{equation}{0}
The logit mixture model, also called the mixed logit model, is widely considered to be the most promising state of the art discrete choice model; see \cite{hensher2003mixed}. In a seminal paper, \cite{mcfadden2000mixed} show that under some mild regularity conditions any discrete choice model that is consistent with random utility maximization has choice probabilities that can be approximated, up to any desired precision, by a logit mixture model  through a right choice of explanatory variables and distributions for the random coefficients. The logit mixture model enables the modelling of preference heterogeneity by allowing the model's coefficients to be randomly distributed across the population under study; see Chapter 6 of \cite{train2009discrete}.

In specifying a logit mixture model, the researcher makes assumptions on the distribution of the model's coefficients (called the \textit{mixing distribution}) and on the \textit{structure} of the covariance matrix. For example, for normally distributed coefficients $\bm{\beta} \sim \mathcal{N}(\bm{\mu}, \bm{\Omega})$, the researcher decides which, if any, covariance matrix elements to estimate and which ones to constrain to zero. Typically, the researcher compares goodness-of-fit statistics on a few competing hypotheses on the structure of the covariance matrix (usually full against diagonal covariance matrix). As the number of all possible covariance matrix specifications grows super-exponentially with the number of distributed coefficients, it is not practically feasible for the researcher to comprehensively compare all possible specifications of the covariance matrix in order to determine an optimal specification to use.\footnote{If there are $R$ random coefficients in a mixed logit model, the number of different covariance matrix specifications corresponding to mutually exclusive and collectively exhaustive subsets of correlated coefficients can be determined by counting the number of ways a set of $R$ elements can be partitioned into non-empty subsets. This is given by the Bell numbers (\citealp{aigner1999characterization}). The first few Bell numbers are $1, 2, 5, 15, 52, 203, 877, 4140, 21147, 115975,\ldots$.}

In this paper, we introduce an algorithmic estimation procedure that discovers the best \textit{block diagonal} covariance matrix specification, corresponding to subsets of correlated coefficients, directly from the data. 

Parsimonious specifications of the covariance matrix are desirable since the number of covariance elements grows quadratically with the number of distributed coefficients. Consequently, sparser models provide efficiency gains in the estimation process compared to estimating a full covariance matrix. \cite{james2018estimation} demonstrates, on an empirical application, that sparser representations of the covariance matrix provide a better fit on the data than a fully unrestricted model as measured by BIC or AIC. \cite{james2018estimation} proposed a factor structured covariance approach to cast the covariances into a lower dimensional representation of latent factors. \cite{keane2013comparing} compared different logit mixture specifications with full, diagonal, and restricted covariance matrices and concluded that a full covariance matrix is not justified by the data in many cases, and that different specifications of the covariance matrix fit best on different datasets.

On the other hand, \cite{hess2017estimation} shows that ignoring statistically significant correlations between the distributed coefficients can distort the estimated distribution of ratios of coefficients, representing the values of willingness-to-pay (WTP) and marginal rates of substitution. Several studies including \cite{hess2017correlation}, \cite{kipperberg2008application}, \cite{scarpa2008utility}, \cite{revelt1998mixed}, and \cite{train1998recreation} have found statistically significant correlations between coefficients in logit mixture models on a number of empirical applications.

The conclusion is that, in general, researchers cannot know, without testing, which restrictions to impose.

In this paper, we propose a new methodology for learning the structure of the covariance matrix in logit mixture models with a normal mixing distribution (or transformations of the normal distribution such as the log-normal or Johnson SB distributions) from the data. In particular, we are interested in algorithmically identifying optimal subsets of correlated coefficients for which we estimate covariances. This corresponds to a block diagonal specification of the covariance matrix. We build on ideas from the mixed-integer optimization (MIO) framework for variable selection in linear regression models developed by \cite{bertsimas2016best}, the Hierarchical Bayes (HB) estimator for logit mixture models, (see \citealp{allenby1997introduction}, \citealp{allenby1998marketing}, and \citealp{train2009discrete}), and its extension to block diagonal covariance matrices by \cite{becker2016bayesian}. We discuss practical extensions of our proposed methodology to cases where some of the coefficients are distributed while others are fixed, latent-class logit mixture models, and logit mixtures with inter- and intra-consumer heterogeneity.

The remainder of this paper is organised as follows. Section 2 presents a brief background on covariance estimation in logit mixture models, sparse covariance matrix estimation in statistics, and the use of mixed-integer programming in model selection. Section 3 presents the proposed methodology for learning the covariance structure in logit mixture models and forms the core methodological contribution of this paper. Section 4 presents Monte Carlo simulations to validate our proposed methodology, in addition to an empirical application. Section 5 discusses the extensions of our proposed methodology mentioned above. Section 6 concludes the paper.

\section{Background}
\setcounter{equation}{0}
This section provides an overview of covariance matrix estimation in logit mixture models, the problem of estimating sparse covariance matrices in statistics, and the motivation for using a mixed-integer programming approach to select an optimal covariance matrix specification. 
\subsection{Covariance Matrix Estimation in Choice Models}%%
We consider the logit mixture model with the utility specification shown in equation (2.1). The indices used are $n\in \{1,2,\cdots,N\}$ for individuals, $m\in\{1,2,\cdots, M_n\}$ for choice situations (or ``menus"), and $j\in\{1,2,\cdots,J_{mn}\}$ for alternatives.
\begin{equation}
U_{jmn} = V_{jmn}(\textbf{X}_{jmn}, \bm{\zeta}_n)+\epsilon_{jmn}
\end{equation}
    	
$U_{jmn}$ is individual $n$'s unobserved utility of alternative $j$ in choice situation $m$, $V_{jmn}$ is the systematic utility function, $\textbf{X}_{jmn}$ is a vector of explanatory variables (e.g. attributes of the alternatives and characteristics of the individual), $\bm{\zeta}_n$ is a vector of individual-specific coefficients, and $\epsilon_{jmn}$ is an error term following the extreme value distribution with zero mean and unit scale. The researcher specifies a distribution for the coefficients and estimates the parameters of that distribution. Typically, a normal or log-normal distribution is specified; see \cite{ben1997modeling} and \cite{revelt1998mixed}. We assume that $\bm{\zeta}_n$ is normally distributed in the population with mean $\bm{\mu}$ and covariance matrix $\bm{\Omega}$. The $\bm{\zeta}_n$'s represent the preferences or tastes of individual decision-makers. \begin{equation}
\bm{\zeta}_n \sim \mathcal{N}(\bm{\mu}, \bm{\Omega})
\end{equation}
Exponentiation of coefficients in the utility equations is used when a particular coefficient is known to have the same sign across the population of decision-makers (e.g. a negative sign for a price coefficient) -- this is equivalent to specifying a log-normal distribution for said coefficient. 

The researcher estimates $\bm{\mu}$ and $\bm{\Omega}$. The matrix $\bm{\Omega}$ represents the covariance structure of the individual-specific coefficients. The variances on the diagonal elements reflect the magnitude of heterogeneity in these coefficients in the population, and the off-diagonal elements represent covariances between these coefficients-- indicating that preferences for one attribute are related to their preferences for another attribute; see \cite{hess2017correlation}.

Conditional on $\bm{\zeta}_n$, the probability of selecting alternative $j$ can be expressed as:

\begin{equation}
P_{jmn}(\bm{\zeta}_n)=\frac{\exp(V_{jmn}(\textbf{X}_{jmn}, \bm{\zeta}_n))}{\sum_{i=1}^{J_{mn}} \exp(V_{imn}(\textbf{X}_{imn}, \bm{\zeta}_n))}
\end{equation}

The individual-specific coefficients $\bm{\zeta}_n$ are random, and the unconditional probability of choice is obtained by integrating over the mixing distribution of these coefficients (which we assume to be normal).

\begin{equation}
P_{jmn}=\int P_{jmn}(\bm{\zeta}_n)\mathcal{N}(\bm{\zeta}_n|\bm{\mu}, \bm{\Omega})d\bm{\zeta}_n
\end{equation}

The model's parameters, $\bm{\mu}$ and $\bm{\Omega}$, can be estimated using Maximum Simulated Likelihood (MSL). This requires integration over a multidimensional distribution. Most applications using MSL for model estimation assume a diagonal covariance matrix specification, due to the computational constraints that manifest through the so-called ``curse of dimensionality": the number of draws required for simulation increases exponentially with the number of variables, making estimation highly intractable; see \cite{guevara2009estimating} and \cite{cherchi2012monte}. 

 \cite{train2005mixed} show that Bayesian methods for estimating the logit mixture model, such as Markov Chain Monte Carlo (MCMC), are less susceptible to the ``curse of dimensionality". The most commonly used estimation method of logit mixtures is an Hierarchical Bayes (HB) estimator; see \cite{allenby1997introduction}, \cite{allenby1998marketing}, and \cite{train2009discrete}. This estimator is based on a three-step Gibbs sampler with an embedded Metropolis-Hastings algorithm which we describe below:\\
 \medskip
\begin{algorithm}[ht]
\SetAlgoLined
 \setcounter{bean}{0}
       \begin{center}
\begin{list}
{\textsc{Step} \arabic{bean}.}{\usecounter{bean}} 
\item  Draw $\bm{\mu}$ given $\bm{\Omega}$ and $\bm{\zeta}_n$ using a normal Bayesian update with unknown mean and known variance. A diffuse prior is used on $\bm{\mu}$ (usually a normal distribution with zero mean and large variances).
\item Draw	$\bm{\Omega}$ given $\bm{\mu}$ and $\bm{\zeta}_n$ using a normal Bayesian update with known mean and unknown variance. Usually an Inverted Wishart (IW) prior is used on $\bm{\Omega}$ for its conjugacy property.\footnote{Common alternatives to the IW prior include the Hierarchical Inverted Wishart (HIW) which is a less informative prior than IW; see \cite{huang2013simple} and \cite{akinc2018bayesian} for details.} 
\item 	Draw $\bm{\zeta}_n$ given $\bm{\mu}$ and $\bm{\Omega}$ using the Metropolis-Hastings algorithm (the conditional posterior is proportional to logit multiplied by a normal density).	
\vspace{-\baselineskip}\mbox{}
        \end{list}
       \end{center}
 \caption{Three-step Gibbs-Sampler}
\end{algorithm}

\cite{becker2016bayesian} extended the three-step Gibbs-sampler (Algorithm 1) to allow for a block diagonal covariance structure specification. The covariance matrix is divided into mutually exclusive blocks, and each block is updated separately in step 2 of Algorithm 1. All of the off-diagonal elements that do not belong to any of the blocks are not estimated and constrained to zero.

Inferring a parsimonious covariance structure from the data has not yet been addressed in the literature on logit mixture models. On the other hand, several methods appear in the statistics literature dedicated to estimating sparse covariance matrices.
%%The sections below present a novel methodology for learning the structure of $\Omega$ using mixed integer programming and cross-validation.

\subsection{Sparse Covariance Matrix Estimation}
In statistics, the covariance selection problem, introduced by \cite{dempster1972covariance}, involves setting some of the elements of the covariance matrix (or its inverse, the concentration matrix) to zero. Several approaches have been developed to achieve parsimonious specifications of the covariance matrix. These mainly fall into one of two categories. The first approach involves traditional multiple hypotheses testing. The relevant papers belonging to this approach include \cite{knuiman1978covariance}, \cite{porteous1985improved}, \cite{drton2004model}, \cite{drton2007multiple}, and \cite{drton2008sinful}. The second approach uses a LASSO penalty ($L_1$ norm) to achieve model selection and estimation simultaneously. The implementation of the penalty methods, however, is nontrivial because of the positive definite constraint on the covariance (or concentration) matrix; see \cite{yuan2007model}. 

\cite{yuan2007model} proposed a penalized likelihood method for model selection and parameter estimation simultaneously using an $L_1$ penalty on the off-diagonal elements. The ``maxdet" interior point algorithm from \cite{vandenberghe1998determinant} is used, and a BIC-type criterion is adopted for the selection of the tuning parameter. A similar approach was taken by \cite{banerjee2008model}, and \cite{dahl2008covariance}.
\cite{meinshausen2006high} suggested fitting a LASSO model to each variable, using others as predictors. This was later enhanced by \cite{friedman2008sparse} who proposed a fast algorithm that cycles through the variables, and fits a modified lasso regression to each variable.

%A similar lasso type method was the Sparse Inverse Covariance matrix (SICE) which was widely applied to constructing brain functional connectivity (e.g. \cite{huang2009learning}, \cite{li2018learning}).

In the Bayesian context, \cite{khondker2013bayesian} introduced the Bayesian Covariance Lasso (BCLASSO), which uses exponential priors on the diagonal elements and double exponential (Laplace) priors on the off-diagonal elements of the concentration matrix. The authors used Gibbs sampling to draw from the diagonal elements (since their full conditionals are available in closed form), and the standard Metropolis-Hastings algorithm for sampling from the off-diagonal elements.

\cite{wang2012bayesian} proposed a similar Bayesian estimator with similar priors (exponential priors for the diagonal elements and double exponential priors for the off-diagonal elements). Data augmentation was used to develop a more efficient block Gibbs sampler, which updates one column and row of the concentration matrix at a time.

In both of the Bayesian covariance LASSO methods by \cite{wang2012bayesian} and \cite{khondker2013bayesian} described above, the estimator does not set any of the covariance matrix entries exactly to zero, instead it generates draws that are concentrated around zero. To distinguish between zero and non-zero elements, \cite{wang2012bayesian} suggests using the thresholding approach recommended by \cite{carvalho2010horseshoe}, which compares the relative magnitudes of the penalized and non-penalized estimates. Alternatively, \cite{khondker2013bayesian} recommends using ``credible regions" based on confidence intervals of the estimates. However, the choice of the significance level is somewhat arbitrary, and coupled with the choice of the penalty itself.

There are fundamental limitations associated with the direct application of the LASSO methods just described to the logit mixture context. First, LASSO penalties penalize larger coefficients more than smaller ones. This is not desirable when estimating behavioural models, as this might result in underestimating the magnitude of heterogeneity in the population. In addition, LASSO methods not only set some covariance elements to zero, but also shrink the non-zero covariances towards zero. The estimated non-zero variances and covariances will clearly be biased, and it is not clear how a post-LASSO type methodology, (e.g. \citealp{belloni2013least}), can be applied to the logit mixture context. In contrast, our proposed mixed-integer programming methodology finds the optimal locations of zeros in the covariance matrix without penalizing the non-zero covariances during the estimation process.  

\subsection{The Mixed-Integer Programming Approach}

The problem of deciding which subsets of the distributed variables are potentially correlated (or equivalently specifying the structure of the covariance matrix to be estimated) naturally admits an integer programming formulation. A covariance element is either estimated or restricted to zero. Each of these decisions is represented by a binary decision variable in the formulation we introduce in Section 3. Making a decision on which covariances to estimate has an effect on information loss in the covariance matrix and on the likelihood, and deciding which covariances to estimate requires balancing information loss and sparsity. We solidify these ideas in the next section.

As described in Section 2.1, Algorithm 1 can be easily modified to draw from block diagonal matrices; see \cite{becker2016bayesian}. Estimating general covariance matrix structures is technically possible through the Metropolis-Hastings algorithm, but with much added computational cost. We therefore restrict the decisions on which covariances to estimate and which to constrain to zero so that the resulting covariance matrix structure is block diagonal. The covariance elements restricted to zero are not estimated, all other elements of the covariance matrix are estimated without penalty. This is in contrast to the LASSO-type methodologies which \textit{also} penalize all the covariance matrix elements leading to possible bias in the estimated values. 

Mixed-integer optimization (MIO) problems are NP-hard, which means that, in general, finding a ``certificate of optimality" in a reasonable amount of time cannot be guaranteed; see \cite{bertsimas1997introduction}.  However, massive developments in the state of the art solvers' ability to solve large scale MIO problems has enabled recent successes in applying MIO methods to statistical problems such as best subset selection \cite{bertsimas2016best}. \cite{bertsimas2016best} show that mixed-integer programming can be used to solve the best variable subset selection problem in linear regression for much larger problem sizes than what was thought possible. \cite{bertsimas2017optimal} find optimal classification trees using an MIO formulation. \cite{aboutalebmsthesis} used mixed-integer and non-convex programming techniques to find an optimal specification for nested logit models. We find that the MIO solution times in our proposed methodology take up only a small fraction of the total estimation time, with the Bayesian MCMC procedure taking up the bulk of the estimation time.

\section{Methodology}
\setcounter{equation}{0}
This section develops our proposed methodology for estimating an optimal covariance structure in logit mixture models. Our goal is to algorithmically find optimal \textit{subsets} of the distributed coefficients for which we estimate covariances.
\subsection{Optimization Problem} %% (Youssef)
Let $\{\bm{\Omega}_d$ $d\in \mathcal{D}\}$ be a set of MCMC posterior draws from an unrestricted (full) covariance matrix $\bm{\Omega}$ (i.e. all the covariances are estimated), and let $\bm{\bm{\Psi}}$ be a sparse block-diagonal representation of the matrix $\bm{\Omega}$. In representing $\bm{\Omega}$ by a sparse matrix $\bm{\Psi}$, there is invariably some loss of information. This is represented by the following equation: 
\begin{equation}
    \bm{\Omega}=\bm{\Psi}+\mathbf{E}
\end{equation}
where the matrix $\mathbf{E}$ is the loss matrix. The problem of interest is to find the optimal balance between information, as represented by $\bm{\Psi}$, and loss as represented $\mathbf{E}$. In this section, we formulate this problem as a mixed-integer optimization problem with a quadratic objective and linear constraints. We first begin with some necessary definitions.
\begin{definition}
    A square matrix $\textbf{M}$ is \textbf{block diagonal} if its diagonal elements are square matrices of any size (possibly even $1 \times 1$), and the off-diagonal terms are zero. Formally, $\textbf{M}$ is block diagonal if there exists square matrices $\textbf{A}_1,...,\textbf{A}_m$ such that $\textbf{M}=\bigoplus_{i=1}^m\textbf{A}_i$.
Where the direct sum of any pair of matrices $\textbf{A}_{m\times n} \oplus \textbf{B}_{p \times q}$ is given as a matrix of size $(m+p)\times (n+q)$ defined as:
\begin{align*}
    \textbf{A} \oplus \textbf{B}= \begin{bmatrix} \textbf{A} & \textbf{0} \\ \textbf{0} & \textbf{B} \end{bmatrix},
\end{align*}
and the boldface zeros are blocks of zeros i.e., zero matrices.
\end{definition}
Any square matrix can be trivially considered to be block diagonal with one block. Each of the square matrices $\textbf{A}_i$ on the diagonal elements of the matrix $\textbf{M}$ represents the covariance matrix of a subset of correlated coefficients. There can be as many blocks as there are rows or columns in the original matrix $\textbf{M}$ (if all the blocks are $1 \times 1 $ matrices).

Block diagonal matrices are restrictive in the sense that the property is dependent on the particular ordering or indices of the distributed coefficients-- which is somewhat arbitrary. We would like to relax this restriction by introducing the notion of a permutation independent block diagonal matrix.
\begin{definition}
    A square matrix $\textbf{M}$ is \textbf{pseudo block diagonal} (PBD) if there exists a permutation of its indices such that the index-permuted matrix $\textbf{M}'$ is block diagonal.
\end{definition}
Note that this definition is not standard in the literature, but is necessary for our exposition. By this definition, any block diagonal matrix is trivially PBD.

We restrict our attention to PBD matrices, for two reasons. First, the three-step Gibbs-sampler (Algorithm 1) can be easily modified to draw from block diagonal matrices (\citealp{becker2016bayesian}) and by trivial extension to PBD matrices through a simple permutation of indices. The second reason is that blocks in PBD matrices naturally correspond to a partition of the correlated coefficients which are arguably more interpretable than general covariance matrices. 

At a high level, the problem of finding an optimal sparse representation of the matrix $\bm{\Omega}$ can be written as follows: 
    \begin{align}
    {\min_{\textbf{E}_d,\bm{\Psi}}} \; & \sum_{d\in\mathcal{D}}{\Vert \textbf{E}_d \Vert_2^2}\\
    {s.t. \;} & \Vert \bm{\Psi} \Vert_0 \leq k\\
    & \bm{\Omega}_d=\bm{\Psi}+\textbf{E}_d \; \forall d\in \mathcal{D} \\
    & \bm{\Psi} \textnormal{ is pseudo block diagonal (PBD)}
\end{align} 
where $\Vert \bm{\Psi} \Vert_0$ (the so-called $L_0$ norm) is the number of non-zero elements in the matrix $\bm{\Psi}$.\footnote{Technically speaking, the $L_0$ norm is not a proper norm because it is not \textit{homogeneous}. Nevertheless, this non-zero counting ``norm" appears in the statistics literature.} $k$ is a parameter that is determined through cross-validation.

Constraint (3.3) stipulates that at most $k$ elements of $\bm{\Psi}$ are nonzero and is, therefore, a sparsity constraint. (3.4) is a conservation of information constraint: what is not in the sparse representation of $\bm{\Omega}$, $\bm{\Psi}$, must be in the loss matrix $\mathbf{E}$.
Constraint (3.5) restricts the class of $\bm{\Psi}$ to PBD matrices, and can be viewed as an interpretability constraint. %% COMMENT: PBD?
This optimization problem can be described as follows: we want to find a PBD representation of the covariance matrix with at most $k$ non-zero elements with minimal square loss of information.%% COMMENT: PBD?

To mathematically formulate constraint (3.5), we first establish an association between trees and PBD matrices. Establishing such a one-to-one correspondence enables us to use the machinery of graph theory to enforce these PBD constraints. As we will find, such a representation will enable us to write constraint (3.5) as a set of linear constraints. We discuss how to construct such a graphical representation of PBD matrices and the linear constraint representation next. 
\begin{proposition}
Any pseudo block diagonal matrix can be represented by a tree. 
\end{proposition}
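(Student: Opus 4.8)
The plan is to strip a PBD matrix down to its purely combinatorial content—namely, the partition of its indices into correlated blocks—and then exhibit an explicit bijection between such partitions and a distinguished class of rooted trees.

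First I would observe that, by the definition of a pseudo block diagonal matrix (Definition 3.2), a PBD matrix $\mathbf{M}$ becomes block diagonal $\mathbf{M}' = \bigoplus_{\ell=1}^{L} \mathbf{A}_\ell$ after permuting its indices. The blocks $\mathbf{A}_1, \ldots, \mathbf{A}_L$ induce a partition $\mathcal{P} = \{B_1, \ldots, B_L\}$ of the index set $\{1, \ldots, R\}$: two indices lie in the same block precisely when their covariance entry is permitted to be non-zero, while entries across different blocks are forced to zero. Because a partition is invariant under relabelling, the particular permutation used to block-diagonalise $\mathbf{M}$ is immaterial, so $\mathcal{P}$ is well defined and captures exactly the sparsity structure of $\mathbf{M}$. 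This reduces the statement to showing that any partition of a finite set can be represented by a tree.

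Next I would construct the tree $T(\mathcal{P})$ explicitly. Introduce a single root node $r$; attach to $r$ one internal node $v_\ell$ for each block $B_\ell \in \mathcal{P}$; and attach to each $v_\ell$ one leaf node for every index $i \in B_\ell$. The resulting graph has $1 + L + R$ vertices and $L + R$ edges, is connected by construction, and contains no cycle (every non-root vertex has a unique parent), so it is a tree. The key step is to verify that this assignment is one-to-one: given any rooted tree of this two-level form, the children of the root recover the blocks and the leaves beneath each such child recover that block's members, so $T$ is invertible onto its image and the correspondence between PBD structures and trees is a bijection.

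The main obstacle is not any deep difficulty but rather precision: one must pin down the exact class of trees—rooted, of depth two, with the root's grandchildren being the coefficient leaves—that is placed in correspondence with PBD support patterns, and argue that passing to the partition $\mathcal{P}$ legitimately quotients out the arbitrary index permutation. Once the combinatorial object is correctly identified as a partition, the tree construction and its inverse are routine, and the one-to-one correspondence advertised in the surrounding text follows immediately; this tree representation is precisely what later allows constraint (3.5) to be encoded through graph-theoretic, and ultimately linear, constraints.
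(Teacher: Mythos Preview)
Your proposal is correct and follows essentially the same construction as the paper: pass from the PBD matrix to the partition of indices induced by its blocks, then build the depth-two rooted tree with one internal node per block and one leaf per coefficient. You supply a bit more rigor than the paper does---you explicitly verify the graph is a tree and argue injectivity of the map---but the underlying idea is identical.
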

\begin{proof}
Suppose $\textbf{M}$ is a PBD matrix, then by Definition 3.2 there exists some index permuted matrix $\textbf{M}'$ of $\textbf{M}$, such that $\textbf{M}'$ is block diagonal. By Definition 3.1, we can write $\textbf{M}'$ as the direct sum of square matrices, i.e., there exists matrices $\textbf{A}_1,...,\textbf{A}_m$ such that  $\textbf{M}'=\bigoplus_{i=1}^m \textbf{A}_i$.
To construct a tree representation of $\textbf{M}'$:
\begin{list}
{\textsc{Step} \arabic{bean}.}{\usecounter{bean}}   
    \item Represent each of the distributed coefficients in the model by a \textit{leaf} node ($\bullet$).
    \item Represent each of the square matrices $\textbf{A}_i$ by an \textit{internal} node ($\diamond$).
    \item Create a \textit{root }node ($\circ$) and connect it to each of the internal nodes.
    \item Connect each leaf node to the internal node representing the square matrix $\textbf{A}_i$ corresponding to its index.
\end{list}
\end{proof} \hfill$\square$\\
This procedure is illustrated by an example in Figure 1.

\begin{figure}[htbp]
\begin{minipage}[b]{0.5\linewidth}
    \centering

\includegraphics[scale=0.25]{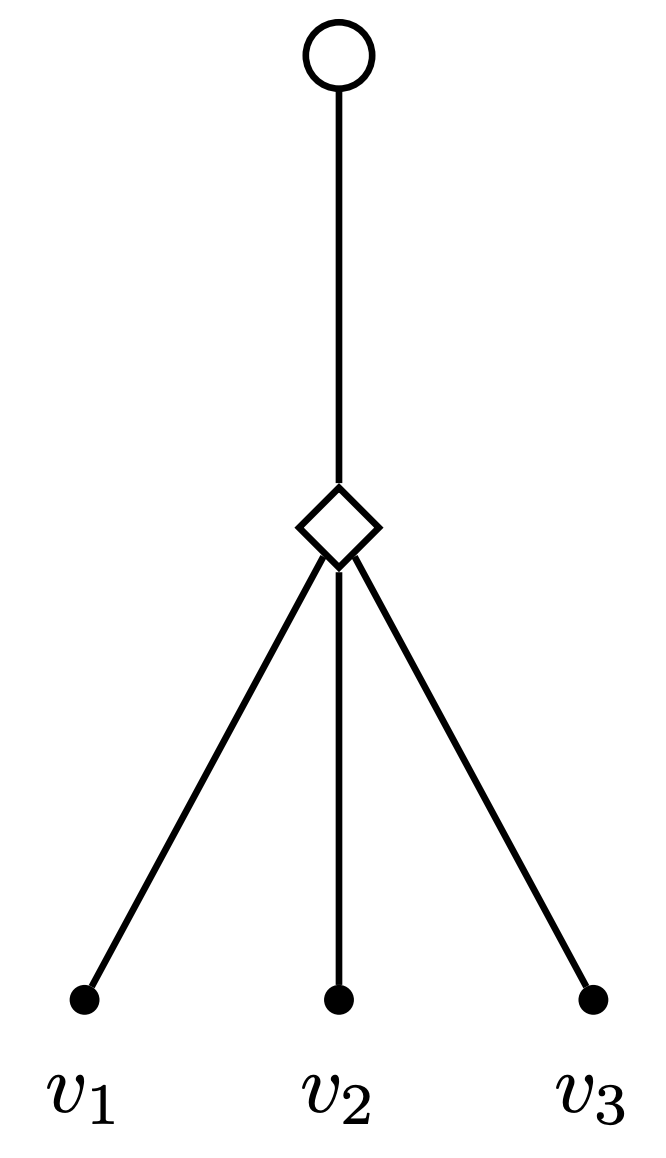}

 \[\bm{\Omega} =
\begin{blockarray}{cccc}
v_1 & v_2 & v_3  & \\
\begin{block}{(ccc)c}
  * & * & * &  v_1 \\
  * & * & * &  v_2 \\
  * & * & * &  v_3 \\
\end{block}
\end{blockarray}
 \]
       \caption*{(a) $\{\{v_1,v_2,v_3\}\}$}
    \label{fig:chapter001_dist_001} 
\end{minipage}
  \hspace{0.5cm}
\begin{minipage}[b]{0.5\linewidth}
    \centering
\includegraphics[scale=0.25]{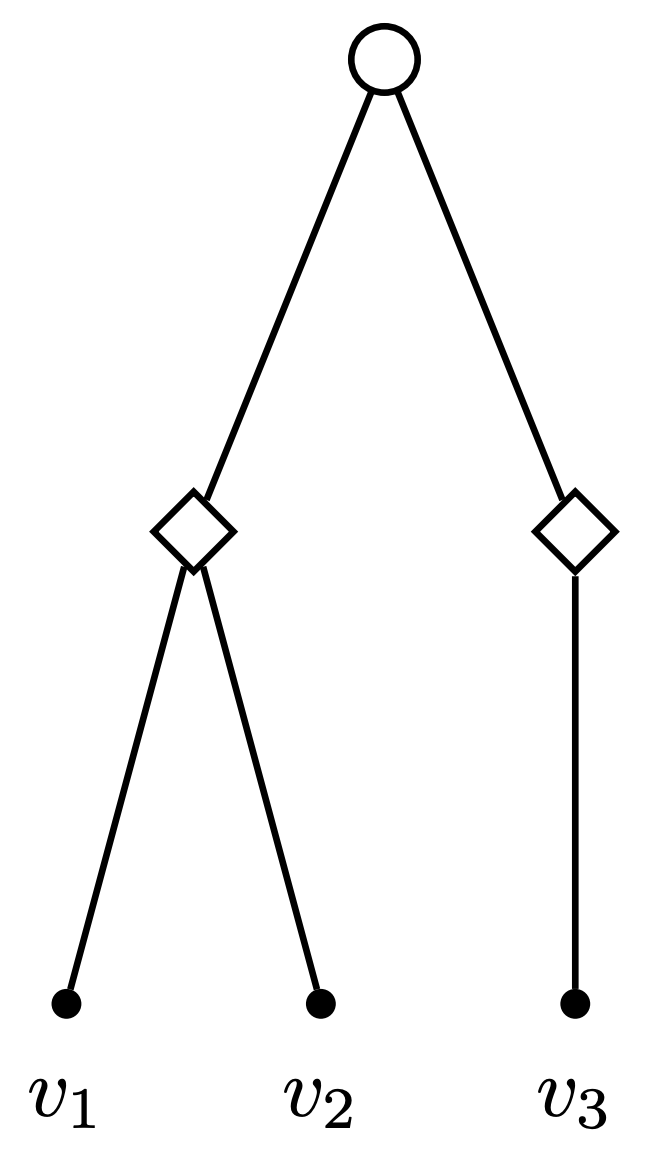}

 \[\bm{\Omega} =
\begin{blockarray}{cccc}
v_1 & v_2 & v_3  & \\
\begin{block}{(ccc)c}
  * & * & 0  &  v_1 \\
  * & * & 0 &  v_2 \\
  0 & 0 & * &  v_3 \\
\end{block}
\end{blockarray}
 \]
  \caption*{(b) $\{\{v_1,v_2\},\{v_3\}\}$}
\end{minipage}
  \hspace{0.5cm}\\
\begin{minipage}[b]{0.5\linewidth}
    \centering
\includegraphics[scale=0.25]{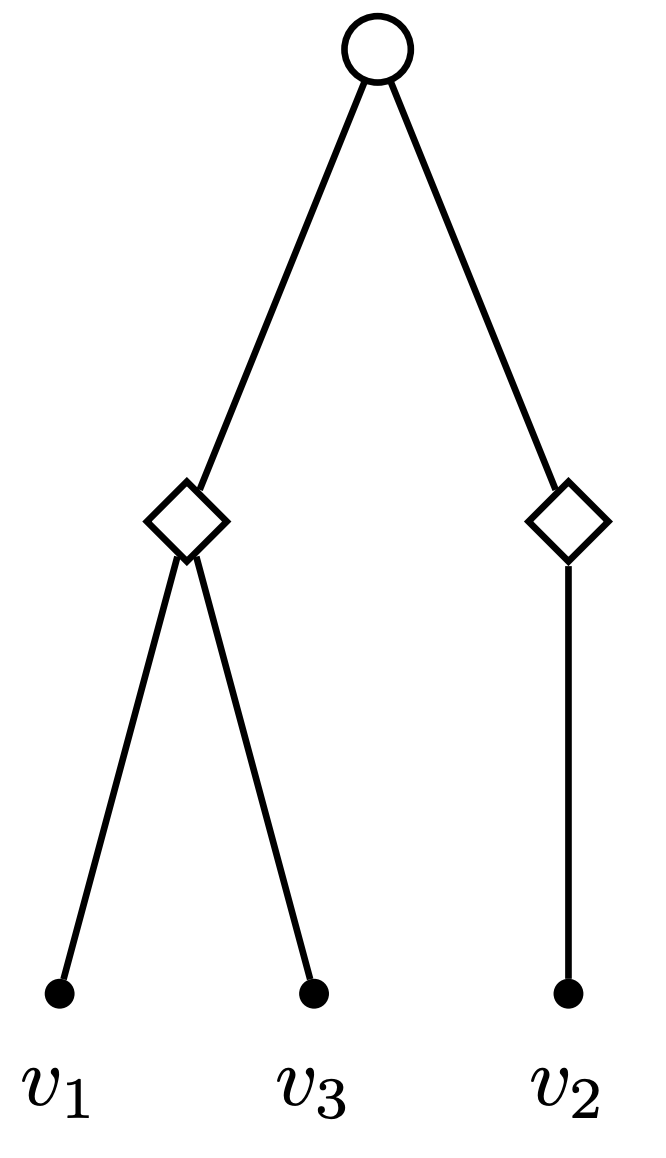}

 \[\bm{\Omega} =
\begin{blockarray}{cccc}
v_1 & v_2 & v_3  & \\
\begin{block}{(ccc)c}
  * & 0 & *  &  v_1 \\
  0 & * & 0 &  v_2 \\
  * & 0 & * &  v_3 \\
\end{block}
\end{blockarray}
 \]
   \caption*{(c) $\{\{v_1,v_3\},\{v_2\}\}$}
\end{minipage}
  \hspace{0.5cm}
\begin{minipage}[b]{0.5\linewidth}
    \centering
\includegraphics[scale=0.25]{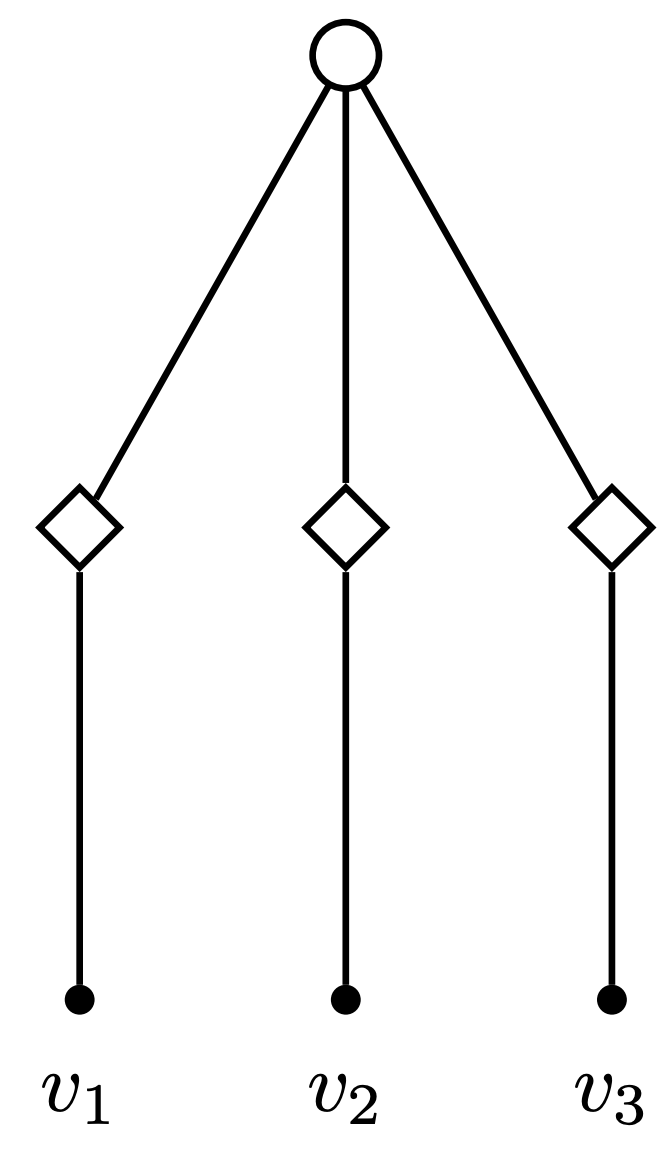}

 \[\bm{\Omega} =
\begin{blockarray}{cccc}
v_1 & v_2 & v_3  & \\
\begin{block}{(ccc)c}
  * & 0 & 0  &  v_1 \\
  0 & * & 0 &  v_2 \\
  0 & 0 & * &  v_3 \\
\end{block}
\end{blockarray}
 \]
  \caption*{(d) $\{\{v_1\},\{v_2\},\{v_3\}\}$}
\end{minipage}

      \caption{An illustration of representing PBD matrices as trees (cf. Proposition 3.1).}
    \label{fig:chapter001_reward_001}
\end{figure}
Let $\mathcal{R}$ denote the index set of distributed coefficients in the model and let $\mathcal{K}$ denote a set of $|\mathcal{R}|$ intermediate nodes representing abstract blocks. These abstract blocks correspond to the constituent block matrices $\textbf{A}_i$ of the sparse PBD representation $\bm{\Psi}$ of the covariance matrix of the distributed coefficients $\bm{\Omega}$. 

To find an optimal PBD matrix $\bm{\Psi}$, we need to determine:
\begin{enumerate}
        \item The number of blocks in $\bm{\Psi}$.
    \item How to form the blocks corresponding to subsets of distributed coefficients. 
\end{enumerate}
From the preceding discussion on representing PBD matrices as trees, we can instead consider the induced graph $\mathcal{G}=(\mathcal{V},\mathcal{{E}})$, where the set of nodes $\mathcal{V}$ consists of a root node $\{r\}$, a set of leaf nodes $\mathcal{R}$ representing the distributed variables, and a set of abstract intermediate nodes $\mathcal{K}$ representing the constituent blocks. Formally, $\mathcal{V}=\{r\}\cup \mathcal{K}\cup \mathcal{R}$.
Under this framework, the two optimization decisions posed above can be re-framed as follows:
\begin{enumerate}
    \item Which internal nodes $b\in\mathcal{K}$ to include in the graph $\mathcal{G}$? %% COMMENT: can be rephrased to: which internal nodes .. are included in the tree.
    \item How to form edges such that the resulting $\mathcal{G}$ is a tree?
\end{enumerate}
To this end, we define a binary variable $y_b$ for $b\in \mathcal{K}$ to denote whether or not the node representing abstract block $b$ is included in the tree representation of $\bm{\Psi}$. Furthermore, let $x_{ij}$ be a binary variable equal to one if there is a directed edge between nodes $i,j \in \mathcal{V}$, and zero otherwise. 

We now look at representing constraints (3.3) and (3.5) as linear constraints. The sparsity constraint (3.3) allows up to $k$ non-zero elements in the matrix $\bm{\Psi}$. To enforce this constraint, we first define a binary variable $z_{ij}=\mathbbm{1}\{\bm{\Psi}_{ij}\neq 0\}$ for $i,j \in \mathcal{R}$, and consider the following set of constraints:
\begin{equation}
    \underbar{\textit{M}}_{ij}z_{ij}\leq \bm{\Psi}_{ij} \leq \bar{M}_{ij}z_{ij} 
\end{equation}
Where $\underbar{\textit{M}}_{ij}$ and $\bar{M}_{ij}$ are lower and upper bounds on the size of the entries of the matrix $\bm{\Psi}$ and are determined from the MCMC posterior draws $\bm{\Omega}_d$ of the full covariance matrix $\bm{\Omega}$. The sparsity constraint (3.3) can now be represented as:
\begin{equation}
    \sum_{i,j \in \mathcal{R}} z_{ij} \leq k
\end{equation}
Since covariance matrices are symmetric, we require that:
\begin{equation}
    z_{ij}=z_{ji}
\end{equation}
To enforce constraint (3.5), we first define the following binary auxiliary variables $w_{ijb}$ to denote if variables $i,j \in \mathcal{R}$ share a common block $b\in \mathcal{K}$, that is:
\begin{equation}
    w_{ijb}=1 \iff x_{bi}=1, x_{bj}=1
\end{equation}
This can be represented by the following linear constraints:
\begin{align}
    w_{ijb} \leq x_{bi} \\
    w_{ijb} \leq x_{bj} \\
    w_{ijb} \geq x_{bi} +x_{bj}-1
\end{align}
(3.10) and (3.11) represent the forward implication of (3.9), and (3.12) represents the backward implication of (3.9).

The covariance between $i,j \in \mathcal{R}$ is estimated if and only if $i$ and $j$ share a common block. Formally,
\begin{align}
    z_{ij}=1 \iff \exists \; b \in \mathcal{K}: w_{ijb}=1
\end{align}
To represent this constraint, first note that the negation of the forward implication of (3.13) is the statement: $ \forall \; b \in \mathcal{K}, \textnormal{ } w_{ijb}=0 \implies z_{ij}=0$ (i.e., if $i$ and $j$ do not share a common block, then the corresponding covariance element $z_{ij}$ is not estimated and constrained to zero). This can be written as the following linear constraint:
\begin{align}
    z_{ij}\leq \sum_{b \in \mathcal{K}}w_{ijb}
\end{align}
Likewise, the negation of the backward implication in (3.13) is $z_{ij}=0 \implies \forall \; b\in\mathcal{K}, w_{ijb}=0$ which can be written as a set of either-or constraints:
\begin{align}
    w_{ijb} + (1-z_{ij}) \leq 1 \; \; \forall b \in \mathcal{K}
\end{align}

Finally, we need to impose a number of structural constraints as follows:
First, each of the nodes representing the distributed variables must belong to one block only:
\begin{align}
 \sum_{i \in \mathcal{K}} x_{ij}=1 \; \forall j\in \mathcal{R}   
\end{align}
Second, the edges $x_e$ must be selected such that the resulting graph is a tree. Recall that a tree with $n$ nodes must exactly have $n-1$ edges (otherwise the addition of an edge results in a cycle and the removal of an edge results in a disconnected graph). This condition can be enforced through the following equality constraint: 
\begin{align}
     \underbrace{\sum_{e \in \mathcal{{E}}}x_{e}}_\text{{Number of edges~}}=\underbrace{\big(\sum_{b \in \mathcal{K}} y_b + |\mathcal{R}| +1\big)}_\text{{Number of nodes~}} -1
\end{align}
Third, block nodes can not have connections unless included in the graph:
\begin{align}
    y_b \leq \sum_{j\in \mathcal V} x_{bj} \leq |\mathcal{V}|y_b \; \forall b \in \mathcal{K}
\end{align}
Fourth, if a block node is included, it must have an incoming connection from the root node:
\begin{equation}
    y_b=x_{rb}
\end{equation}
Finally, we always allow the diagonal elements (representing the variances) to be nonzero, i.e., $z_{ii}=1$ for all $i\in \mathcal{R}$, we permit no incoming edges to the root node:   $ x_{ir}=0 \; \; \forall i \in \mathcal{V} $, no outgoing edges from the leaf nodes $ x_{ij}=0 \; \; \forall i \in \mathcal{R}, \; j \in \mathcal{V} $, and disallow self-arcs: $ x_{ii}=0 \; \; \forall i \in \mathcal{V}$.

Putting it all together, we arrive at the following optimization problem:
\medskip
\hrule\hrule
\textbf{Pseudo Block Diagonal Optimization Problem (P)}
\hrule
    \begin{align*}
    {\min_{\textbf{E}_d,\bm{\Psi},z,x,w,y}} \; & \sum_{d\in\mathcal{D}}{\Vert \textbf{E}_d \Vert_2^2}\\
    {\textnormal{subject to:} \;}
    & \bm{\Omega}_d=\bm{\Psi}+\textbf{E}_d \; \forall d\in \mathcal{D} \\
    & \sum_{i,j \in \mathcal{R}} z_{ij} \leq k \\
     & \underbar{$M$}_{ij} z_{ij} \leq \bm{\Psi}_{ij} \leq \overline{M}_{ij} z_{ij} \; \; \forall i,j \in \mathcal{R}\\
     & z_{ij}=z_{ji} \; \; \forall i,j \in \mathcal{R}\\
     & w_{ijb} \leq x_{bi} \; \; \forall i,j \in \mathcal{R}, b \in \mathcal{K}\\
     & w_{ijb} \leq x_{bj} \; \; \forall i,j \in \mathcal{R}, b \in \mathcal{K}\\
     & x_{bi}+x_{bj} -1 \leq w_{ijb} \; \; \forall i,j \in \mathcal{R}, b \in \mathcal{K} \\
     & z_{ij} \leq \sum_{b \in \mathcal{K}}w_{ijb} \\
     & w_{ijb}+(1-z_{ij}) \leq 1 \; \forall b \in \mathcal{K}\\
     & \sum_{i \in \mathcal{K}} x_{ij}=1 \; \forall j\in \mathcal{V}\\
     & y_b \leq \sum_{j\in \mathcal V} x_{bj} \leq |\mathcal{V}|y_b \; \forall b \in \mathcal{K} \\
     & \sum_{i,j \in \mathcal{V}}x_{ij} = \big(\sum_{b \in \mathcal{K}} y_b + |\mathcal{V}| +1\big) -1 \\
     & y_b =x_{rb} \; \forall b \in \mathcal{K} \\
     & z_{ii} =1 \; \; \forall i \in \mathcal{R}\\
     & x_{ir}=0 \; \; \forall i \in \mathcal{V} \\
     & x_{ij}=0 \; \; \forall i \in \mathcal{R}, \; j \in \mathcal{V} \\
     & x_{ii}=0 \; \; \forall i \in \mathcal{V} \\
     & x \in \{0,1\}^{|\mathcal{V}| \times |\mathcal{V}|}, y\in \{0,1\}^{|\mathcal{R}|}, w \in \{0,1\}^{|\mathcal{R}| \times |\mathcal{R}| \times |\mathcal{R}|},\\
     &z \in \{0,1\}^{|\mathcal{R}| \times |\mathcal{R}|}, \textbf{E}_d \in \mathbb{R}^{|\mathcal{R}| \times |\mathcal{R}|}, \bm{\Psi} \in \mathbb{R}^{|\mathcal{R}| \times |\mathcal{R}|}
\end{align*} 
\vspace{-\baselineskip}\mbox{}
\hrule
\medskip
Given a set of draws from the full covariance matrix $\{\bm{\Omega}_d, \; d\in\mathcal{D}\}$ and a sparsity level parameter $k$, optimization problem (P) returns $\bm{\Psi}$-- a sparse PBD representation of $\bm{\Omega}$ with at most $k$ non-zeros. The sparse representation is optimal in the sense that it constructed with minimal square loss of information across draws $d\in\mathcal{D}$. 

This mixed-integer optimization problem has a quadratic objective and linear constraints and can be solved efficiently to optimality for relatively large problem sizes using standard conic quadratic programming techniques (\citealp{bertsekas1997nonlinear} and \citealp{boyd2004convex}) implemented in state of the art solvers such as GUROBI (\citealp{gurobi2015gurobi}) and CPLEX (\citealp{cplex2009v12}).
\newpage
\subsection{Overall Algorithm} %% (Youssef)
There are two critical components to our algorithm. The first component is the optimization problem (P) described in the previous section. Given draws from the full covariance matrix, and a regularization parameter $k$, (P) outputs an optimal block structure as represented by the matrix $\bm{\Psi}$.
The second crucial component is a procedure that can estimate variance and covariance elements of pseduo block diagonal matrices from the data. This can be accomplished by applying step 2 of the three-step Gibbs sampling procedure (Algorithm 1) to each block separately as suggested by \cite{becker2016bayesian}. Tying these two components together, we arrive at the following algorithm:

\begin{algorithm}[ht]
\SetAlgoLined
 \setcounter{bean}{0}
       \begin{center}
\begin{list}
{\textsc{Step} \arabic{bean}.}{\usecounter{bean}}   

\item Estimate the logit mixture model with $\bm{\Omega}$ specified as a full covariance matrix using Algorithm 1. The output of this step includes the posterior draws of the matrix $\bm{\Omega}$: $\bm{\Omega}_d,$ $\; d\in \mathcal{D}$.
    \item For a choice of regularization parameter $k$, solve the optimization problem (P) to obtain an optimal block structure as represented by the matrix $\Psi$.
    \item Estimate the logit mixture model with $\bm{\Omega}$ specified as a block diagonal matrix according to the blocks obtained in step 2 using a block-wise three-step Gibbs sampling procedure (Algorithm 1).
    \item Evaluate the log-likelihood on training and validation data-sets.
    \item Repeat steps 2 to 4 for various values of $k$.
    \item Choose the specification with the best validation log-likelihood.
    \item Estimate the logit mixture model with optimal specification for $\bm{\Omega}$ as determined from step 6 on the full dataset.
    \vspace{-\baselineskip}\mbox{}
    \end{list}
       \end{center}
 \caption{\underline{M}ixed \underline{I}nteger \underline{S}parse \underline{C}ovariance Matrix Estimation Algorithm (\textbf{MISC})}
\end{algorithm}

The number of non-zero covariances in $\bm{\Omega}$ can vary between $|\mathcal{R}|$ to $|\mathcal{R}|^2$. Consequently, step 5 has to be repeated for $$|\mathcal{R}| \leq k \leq |\mathcal{R}|^2$$ by varying $k$ in increments of 2 (since the covariance matrix is symmetric) for a total of $|\mathcal{R}|(|\mathcal{R}| -1)/2$ runs. Note that these runs are \textit{embarrassingly parallel}-- meaning they can be run simultaneously.

We end this section by suggesting a more practical alternative to controlling sparsity than restricting the number of zero elements $\Vert \bm{\Psi} \Vert_0$.
Consider instead a modified optimization problem where Constraint (3.7) that restricts the number of non-zeros in the matrix $\bm{\Psi}$, $\sum_{i,j \in \mathcal{R}} z_{ij} \leq k$, in (P) is replaced by $\sum_{b \in \mathcal{K}} y_k \geq p$. This modified constraint now stipulates that the total number of blocks in $\bm{\Psi}$ is at least $p$. Increasing the number of blocks increases the sparsity level of the matrix since each added block constraints the covariances between the coefficients that are in the added block and those that are not to zero. 
If $\bm{\Psi}$ consists of one only block, then it is a full matrix and all covariances are estimated. With as many blocks as there are rows or columns of $\bm{\Omega}$, $\bm{\Psi}$ becomes a diagonal matrix will all the covariance elements restricted to zero. The benefit of using the number of blocks to control sparsity is that under this regime, step 5 is now to be repeated for $1 \leq p \leq |\mathcal{R}|$ by varying $p$ in increments of one for a total of $|\mathcal{R}|$ runs only. The downside being the loss in granularity in how the sparsity is specified.

\subsection{Implementation}
The MISC algorithm consisting of the mixed-integer optimization problem (P) and the block-wise three-step Gibbs sampler (Algorithm 1) has been implemented in the Julia programming language (\citealp{bezanson2017julia}) through the JuMP mathematical optimization interface (\citealp{dunning2017jump}) and the GUROBI mixed-integer solver (\citealp{gurobi2015gurobi}).  The authors make the source code accessible under an MIT licence through \url{https://github.com/ymedhat95/MISC}.

\section{Computational Experiments}
\setcounter{equation}{0}
In this section we validate the MISC algorithm introduced in Section 3. In Section 4.1, we demonstrate, through a Monte Carlo experiment that MISC can correctly recover the true covariance matrix structure. In Section 4.2, we apply our algorithm to the Apollo mode choice dataset from \cite{hess2019apollo}.
\subsection{Monte Carlo Experiments}
\subsubsection{Dataset Description}
The MISC algorithm introduced in Section 3 is applied to the synthetic choice-based-conjoint (CBC) Grapes dataset from \citet{ben2019foundations}. The setup is as follows: individuals are presented with eight menus, each including three different alternatives which are bunches of grapes with varying prices and attributes in addition to an opt-out alternative. The dependent variable is the choice between the three different bunches or not buying grapes at all (opting-out). The attributes and attribute levels are the same as in \cite{ben2019foundations}, and are presented in Table 1. The attributes of the different alternatives are drawn uniformly from their corresponding distributions.

\begin{table}
\caption{Attributes and attribute levels for the synthetic Grapes dataset.}
\centering
\begin{tabular}{ccc}
\toprule
Attribute & Symbol & Levels \\ 
\midrule
 Price & P & \$1 to \$4 \\  
 Sweetness & S & Sweet (1) or Tart (0)   \\ 
 Crispness & C & Crisp (1) or Soft (0)    \\
 Size & L & Large (1) or Small (0)    \\
 Organic & O & Organic (1) or Non-organic (0)  \\
\bottomrule
\end{tabular}
\footnotesize
\renewcommand{\baselineskip}{11pt}
\end{table}

The utility equations (normalized to the opt-out alternative) are presented in equations (4.1)-(4.2). All coefficients are distributed with inter-consumer heterogeneity, as denoted by the subscripts $n$.
\begin{align}
U_{jmn} &= e^{-\alpha_n}(-P_{jmn}+S_{jmn}\beta_{S_n}  + C_{jmn} \beta_{C_n}+L_{jmn} \beta_{L_n}  + O_{jmn}\beta_{O_n} +
SC_{jmn}\beta_{SC_n}  \nonumber \\& + SG_{jmn}\beta_{SG_n}+\beta_{q_n} )+\epsilon_{jmn}, \; \; j\in\{1,2,3\}\\
U_{opt-out,mn} &= 0 + \epsilon_{opt-out,mn} 
\end{align}
$U_{jmn}$ represents the utility of alternative $j$ in menu $m$ presented to individual $n$. $\alpha_n$  is a scale parameter. Exponentiation is used to ensure that it is always positive. $P_{jmn}$ is the price of grapes bunch $j$ in menu $m$ faced by individual $n$. Its coefficient normalized to -1. $S_{jmn}$, $C_{jmn}$, $L_{jmn}$, and $O_{jmn}$ represent sweetness, crispness, size, and organic dummies of bunch $j$ as indicated in Table 1, with coefficients $\beta_{S_{n}}$, $\beta_{C_n}$, $\beta_{L_n}$, and $\beta_{O_n}$  respectively.  $SC_{jmn}$ and  $SG_{jmn}$ represent interaction terms between sweetness and crispness, and sweetness and a gender dummy with coefficients $\beta_{SC_n}$, $\beta_{SG_n}$ respectively. $\beta_{q_n}$ is a constant for choosing one of the three bunches of grapes compared to opting-out. $\epsilon_{jmn}$ is an independently and identically distributed error term following the extreme value distribution with mean zero and unit scale.

The model is specified in the willingness-to-pay (WTP) space (i.e., the price coefficient is fixed to -1 and the scale parameter $\alpha_n$ is estimated). Therefore, all the coefficients represent the willingness-to-pay for their corresponding attributes. 

The population means and covariances of the coefficients of the synthetic population are shown Table 2. The covariance matrix is block diagonal and admits the tree representation shown in Figure 2 (cf. Proposition 3.1). Let the coefficients $\beta_S,\beta_C,\beta_L,\beta_O,\beta_{SC},\beta_{SG},\beta_q,\alpha$ in Table 2 correspond to indices $1,2,3,4,5,6,7,8$ respectively. The structure of the population covariance can be compactly represented as $\{1,2,3\}\{4,5\}\{6\}\{7\}\{8\}$. We will henceforth use this representation.

\begin{table}
\caption{Means and covariances of the distributed coefficients used to generate the synthetic data.}
\centering
\begin{tabular}{c c c c c c c c c c}
\toprule
Coefficient & Mean &\multicolumn{8}{c}{Covariance}  \\
\midrule
{} & {} &  $\beta_S$   &  $\beta_C$ & $\beta_L$ &  $\beta_O$ &  $\beta_{SC}$ &  $\beta_{SG}$ &  $\beta_q$ &$\alpha$ \\  
  $\beta_S$ & 1 & .9 & .4 & .2 & 0 & 0 & 0 & 0 & 0   \\ 
  $\beta_C$ & .3 & .4 & .2 & .05 & 0 & 0 & 0 & 0 & 0   \\  
 $\beta_L$ & .2 & .2 & .05 & .15 & 0 & 0 & 0 & 0 & 0   \\  
 $\beta_O$ & .1 & 0 & 0 & 0 & .4 & .2 & 0 & 0 & 0   \\ 
 $\beta_{SC}$ & 0 & 0& 0 & 0 & .2 & .3 & 0 & 0 & 0   \\ 
 $\beta_{SG}$ & .1 & 0 & 0 & 0 & 0 & 0 & .4 & 0 & 0   \\ 
 $\beta_q$ & 2 & 0 & 0 & 0 & 0 & 0 & 0 & 1 & 0   \\ 
 $\alpha$ & -1.5 & 0 & 0 & 0 & 0 & 0 & 0 & 0 & .25   \\ 
\bottomrule
\end{tabular}
\footnotesize
\renewcommand{\baselineskip}{11pt}
\end{table}

\begin{figure}
    \centering
\includegraphics[scale=0.35]{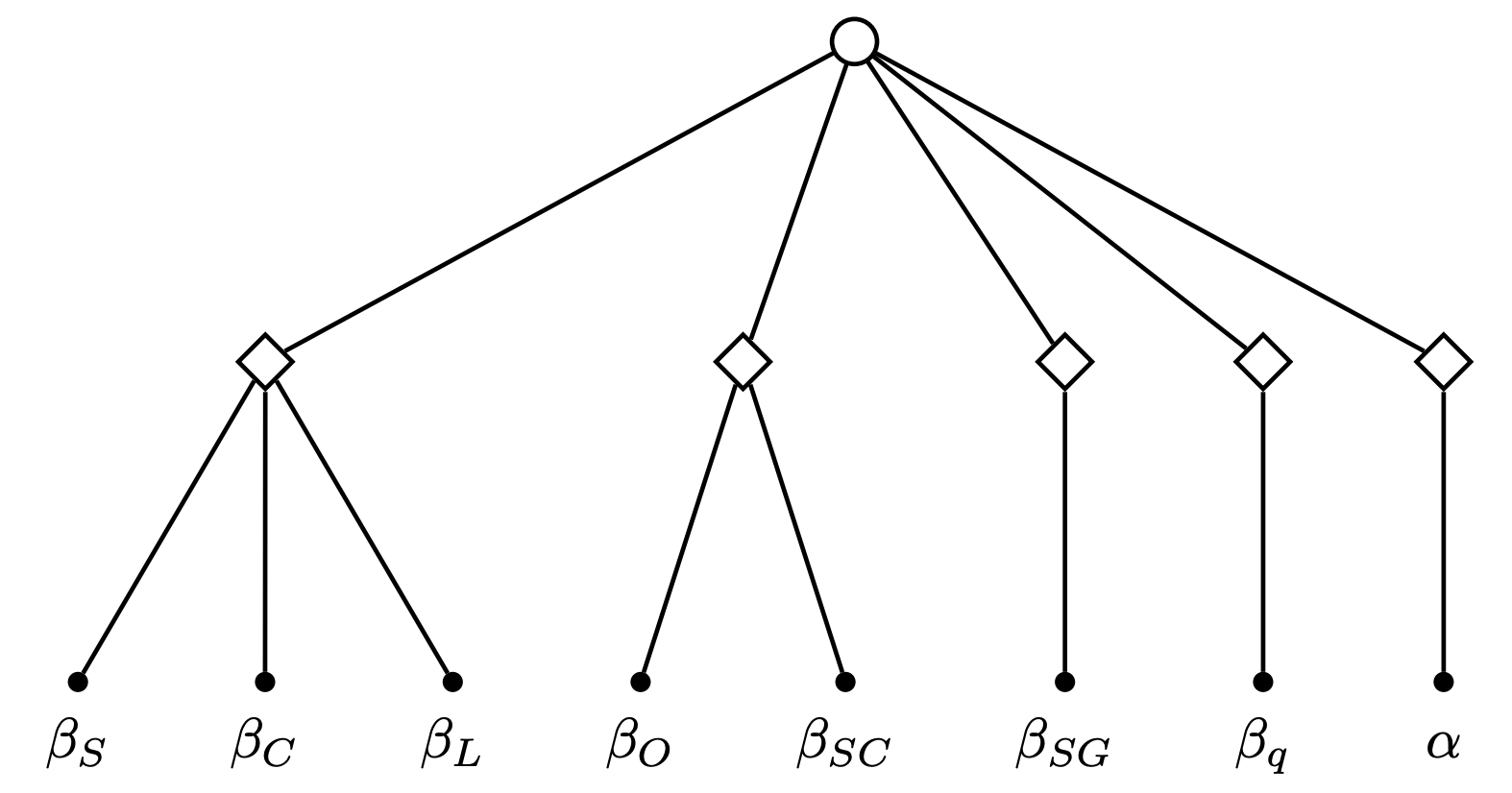}
\caption{Tree representation of the block diagonal structure in Table 2.}
\end{figure}
  
\subsubsection{Block Structure Recovery and the Effect of Sample Size}
The goal is here to use the MISC algorithm to recover the true covariance structure shown in Table 2 from the data. Out-of-sample validation is performed, to determine the optimal sparsity level, using a similar dataset, but with different individuals whose preferences follow the same multivariate normal distribution with the coefficients shown in Table 2. 

The results of the experiment are presented in Table 3 and Table 4. Table 3 shows values of the training and validation log-likelihoods for various values of the regularizer $k$, the maximum number of non-zero elements in the sparse representation of the matrix, and the corresponding optimal block structure output of the optimization problem (P). The experiment is repeated for various sample sizes. Notice that the training log-likelihood increases with \textit{decreasing} sparsity level. This is expected as estimating additional covariance matrix elements cannot worsen the log-likelihood on the training sample. On the hold-out validation sample, however, denser covariance matrices do not necessarily perform better. This is akin to the machine learning concept of over-fitting: a more complicated model does not necessarily generalize better. Table 4 shows the estimated means and covariances for the specifications corresponding to the optimal sparsity levels for each of the three experiments determined from Table 3. 

For sample sizes of $10,000$ and $1000$ individuals, the block diagonal structure with the best validation log-likelihood is the true structure  $\{1,2,3\}\{4,5\}\{6\}\{7\}\{8\}$. For the sample size of 500 individuals, a block diagonal structure where an extra covariance term is estimated is recovered: $\{1,2,3\}\{4,5\}\{6\}\{7, 8\}$.

\begin{table}[ht]
 \caption{Output of the MISC algorithm for different levels of regularization and sample sizes.}
\begin{center}

\caption*{(a) $N_{training}=10000$ and $N_{validation}=5000$}

\includegraphics[scale=0.28]{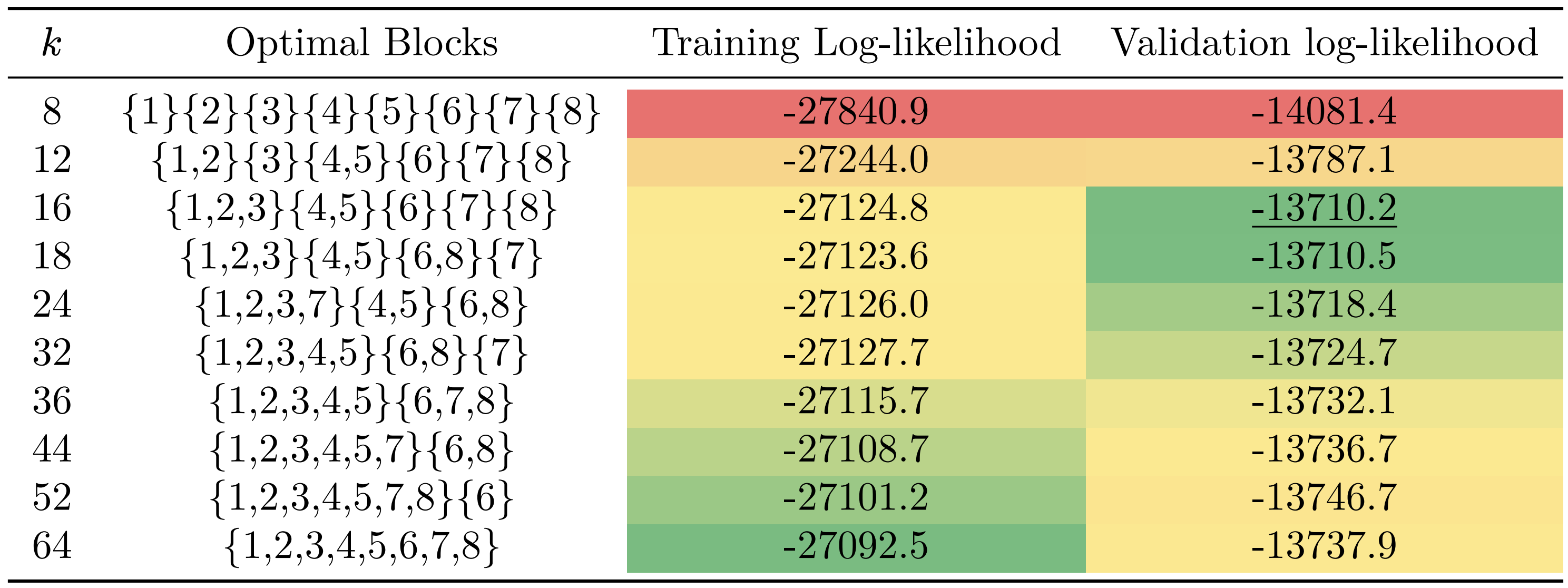}

\medskip

\caption*{(b) $N_{training}=1000$ and $N_{validation}=500$}
\includegraphics[scale=0.28]{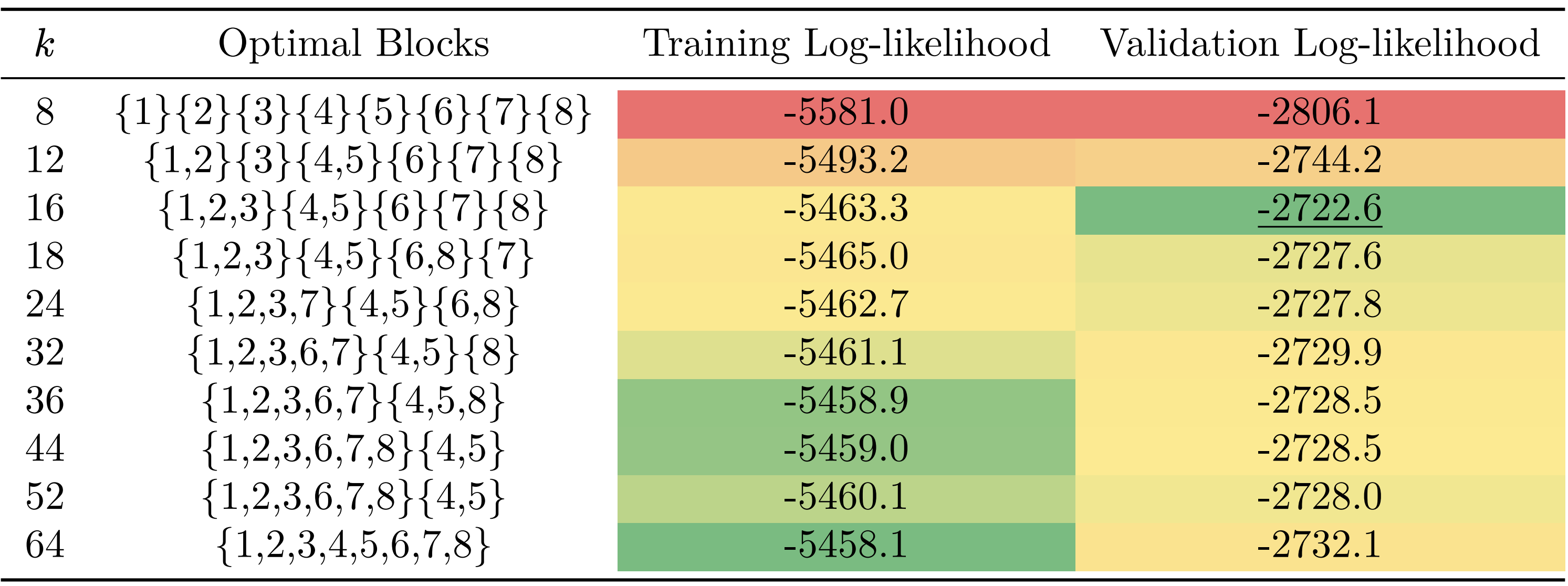}

\medskip

\caption*{(c) $N_{training}=500$ and $N_{validation}=250$}
\includegraphics[scale=0.28]{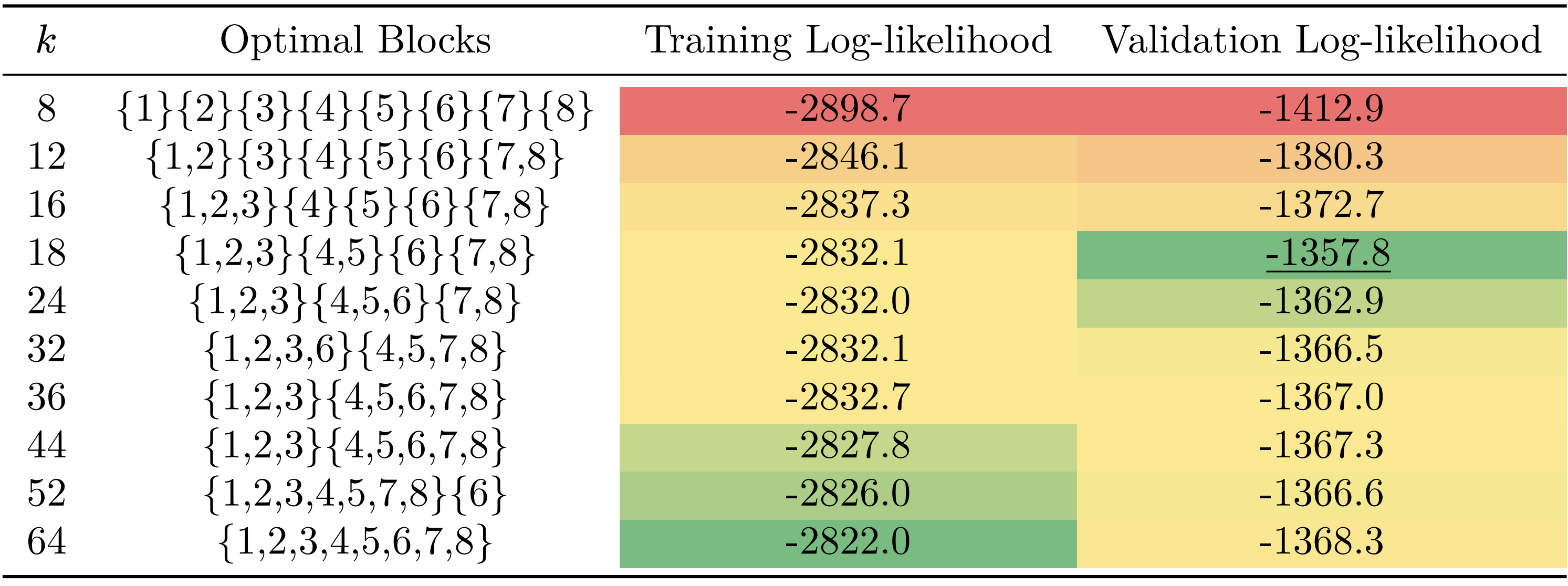}

\end{center}
\footnotesize
\renewcommand{\baselineskip}{11pt}
\textbf{Note:} {$k$ is the maximum number of non-zero elements in the covariance matrix. The Optimal Blocks column shows the optimal covariance structure at regularization level $k$-- the output of optimization problem (P). The colouring shown is proportional to the log-likelihood. The best validation log-likelihood for each experiment is underlined.}
\end{table}
\newpage
\begin{table}[ht]
\begin{center}
  \caption{Estimated parameters for the MISC optimal specifications obtained for different sample sizes.}  

\caption*{(a) $N_{train}=10000$, $N_{validation}=5000$}
\begin{tabular}{c c c c c c c c c c}
\toprule
Coefficient & Mean &\multicolumn{8}{c}{Covariance}  \\
\midrule
{} & {} &  $\beta_S$   &  $\beta_C$ & $\beta_L$ &  $\beta_O$ &  $\beta_{SC}$ &  $\beta_{SG}$ &  $\beta_q$ &$\alpha$ \\  
  $\beta_S$ & 1.000 & .905 & .394 & .195& 0 & 0 & 0& 0 & 0  \\
     $ $ & (.0147)  & (.0241)  & (.0116) & (.0086)  & -  & -  & -  & -  &   -  \\ 
  $\beta_C$ & .297 & .394 & .204& .046& 0 & 0 & 0 & 0 & 0   \\  
     $ $ & (.0097)  & (.0116)  & (.0074) & (.0050) & -  & -  & -  & -  &   -  \\ 
 $\beta_L$ & .198 & .195& .046& .154 & 0 & 0 & 0 & 0 & 0   \\  
    $ $ &  (.0070) &  (.0086) & (.0050) & (.0063)  & -  & -  & -  & -  &   -  \\ 
 $\beta_O$ & .106 & 0 & 0 & 0 & .407 & .190 & 0 & 0 & 0   \\ 
    $ $ & (.0091)  & -  & - & - &  (.0113) & (.0010)  & -  &  - &  -   \\ 
 $\beta_{SC}$ & 0.014& 0& 0 & 0 & .190& .291 & 0 & 0 & 0   \\ 
    $ $ & (.0132)  & -  & - & - & (.0010)  & (.0020)  & -  & -  &  -   \\ 
 $\beta_{SG}$ & .0910& 0 & 0 & 0 & 0 & 0 & .375 & 0 & 0   \\ 
    $ $ & (.0102)  & -  & - & - & -  &  - &  (.0131) & -  &  -   \\ 
 $\beta_q$ & 2.031 & 0 & 0 & 0 & 0 & 0 & 0 & 1.051& 0   \\ 
    $ $ & (.0159)  & - &-  & - &  - &  - &  - & (.0347)  &   -  \\ 
 $\alpha$ & -1.507 & 0 & 0 & 0 & 0 & 0 & 0 & 0 & .258  \\ 
     $ $ & (.0167)  & -  &-  & - & -  & -  & -  &  - &  (.0178)   \\ 
 \bottomrule
\end{tabular}

\medskip

\caption*{(b) $N_{train}=1000$, $N_{validation}=500$}
\begin{tabular}{c c c c c c c c c c}
\toprule
Coefficient & Mean &\multicolumn{8}{c}{Covariance}  \\
\midrule
{} & {} &  $\beta_S$   &  $\beta_C$ & $\beta_L$ &  $\beta_O$ &  $\beta_{SC}$ &  $\beta_{SG}$ &  $\beta_q$ &$\alpha$ \\  
  $\beta_S$ & .931 & .875 & .350 & .187& 0 & 0 & 0& 0 & 0  \\
     $ $ & (.0408)  & (.0757)  & (.0390) & (.0284)  & -  & -  & -  & -  &   -  \\ 
  $\beta_C$ & .241 & .350 & .178& .058& 0 & 0 & 0 & 0 & 0   \\  
     $ $ & (.0263)  & (.0390)  & (.0266) & (.0145) & -  & -  & -  & -  &   -  \\ 
 $\beta_L$ & .168 & .187& .058& .152 & 0 & 0 & 0 & 0 & 0   \\  
    $ $ &  (.0215) &  (.0284) & (.0145) & (.0226)  & -  & -  & -  & -  &   -  \\ 
 $\beta_O$ & .011 & 0 & 0 & 0 & .3607 & .151 & 0 & 0 & 0   \\ 
    $ $ & (.0273)  & -  & - & - &  (.0327) & (.0268)  & -  &  - &  -   \\ 
 $\beta_{SC}$ & .013& 0& 0 & 0 & .151& .314 & 0 & 0 & 0   \\ 
    $ $ & (.0350)  & -  & - & - & (.0268)  & (.0494)  & -  & -  &  -   \\ 
 $\beta_{SG}$ & .063& 0 & 0 & 0 & 0 & 0 & .379 & 0 & 0   \\ 
    $ $ & (.0312)  & -  & - & - & -  &  - &  (.0429) & -  &  -   \\ 
 $\beta_q$ & 2.067 & 0 & 0 & 0 & 0 & 0 & 0 & 1.259& 0   \\ 
    $ $ & (.0506)  & - &-  & - &  - &  - &  - & (.1220)  &   -  \\ 
 $\alpha$ & -1.590 & 0 & 0 & 0 & 0 & 0 & 0 & 0 & .381  \\ 
     $ $ & (.0661)  & -  &-  & - & -  & -  & -  &  - &  (.0855)   \\ 
 \bottomrule
\end{tabular}

\end{center}
\footnotesize
\renewcommand{\baselineskip}{11pt}
\end{table}

\newpage

\begin{table}[ht]
\begin{center}
  \caption*{\footnotesize{\textbf{Table 4} (continued):} Estimated parameters for the MISC optimal specifications obtained for different sample sizes.}

\caption*{(c) $N_{train}=500$, $N_{validation}=250$}
\begin{tabular}{c c c c c c c c c c}
\toprule
Coefficient & Mean &\multicolumn{8}{c}{Covariance}  \\
\midrule
{} & {} &  $\beta_S$   &  $\beta_C$ & $\beta_L$ &  $\beta_O$ &  $\beta_{SC}$ &  $\beta_{SG}$ &  $\beta_q$ &$\alpha$ \\  
$\beta_S$ & .940& .898 & .468& .170& 0 & 0 & 0& 0 & 0\\
    $ $ &(.0618) & (.1086) & (.0607) & (.0382) & - & - &- &- & - \\
$\beta_C$ &.293 & .468 & .283 & .075 & 0 & 0 & 0 & 0 & 0 \\
    $ $ &(.0475) & (.0607)& (.0518)& (.0212) & - & - & - & - & - \\
$\beta_L$ &.205 & .170 & .075 & .125 & 0 & 0 & 0 & 0 & 0 \\
    $ $ &(.0296) & (.0382) & (.0212) & (.0266) & - & -& -& - & - \\
$\beta_O$ & .153& 0 & 0 & 0 & .361 & .153 & 0 & 0 & 0\\
    $ $ &(.0385) & - &  -&- & (.0475) & (.0372) & -& -&  -\\
$\beta_{SC}$ &-.068& 0 &  0&  0 & .153 & .224 & 0 & 0 \\
    $ $ & (.0572) & -& - & - & (.0372) & (.0685) & - & - &  -\\
 $\beta_{SG}$&  .174& 0 & 0 & 0 & 0& 0 & .375 & 0& 0 \\
    $ $ &(.0451) &-  & -& - &- & -& (.0600) & - &- \\
 $\beta_q$ & 2.005 & 0 & 0& 0& 0& 0& 0 & 1.131 & .121 \\
    $ $ &(.0744) & -  & - & - & - & - & - & (.1647) & (.0788) \\
 $\alpha$ & -1.375& 0& 0 &  0& 0 & 0 & 0 & .121 & .254 \\
    $ $ &(.0783) & - &-  & -&- &- &- & (.0788) & (.0878)\\
 \bottomrule
\end{tabular}

\end{center}
\footnotesize
\renewcommand{\baselineskip}{11pt}
\textbf{Note:} {Parameter means and covariances were calculated from draws from convergent MCMC posterior draws from the mean vector and covariance matrix respectively. Standard errors are shown in brackets below the corresponding estimate.}
\end{table}

\subsubsection{Edge Cases: Full and Diagonal Matrices}
In this section, we demonstrate the ability of MISC to recover the correct covariance structure when the true covariance structure is a full matrix and when it is a diagonal matrix. For variety, we regularize using the number of blocks, $p$, instead of the number of non-zero elements (as discussed at the end of Section 3.2). 
For the full matrix experiment, a random normally distributed matrix with mean zero and standard deviation 0.1 was added to the covariance matrix in Table 2. For the diagonal experiment, only the variances in Table 2 were preserved. In both experiments the means of the coefficients were unchanged.  Data for 15000 individuals (10000 training, and 5000 validation) were generated according to these two schemes (full covariance and diagonal covariance matrices). The results are shown in Table 5. The MISC algorithm correctly recovers the structure $\{1,2,3,4,5,6,7,8\}$ for the full matrix experiment, and the structure 
$\{1\}\{3\}\{3\}\{4\}\{5\}\{6\}\{7\}\{8\}$ for the diagonal matrix experiment.

\begin{table}
\begin{center}
\caption{Output of the MISC algorithm for various level of regularization.}

\caption*{(a) Full matrix experiment}
\includegraphics[scale=0.28]{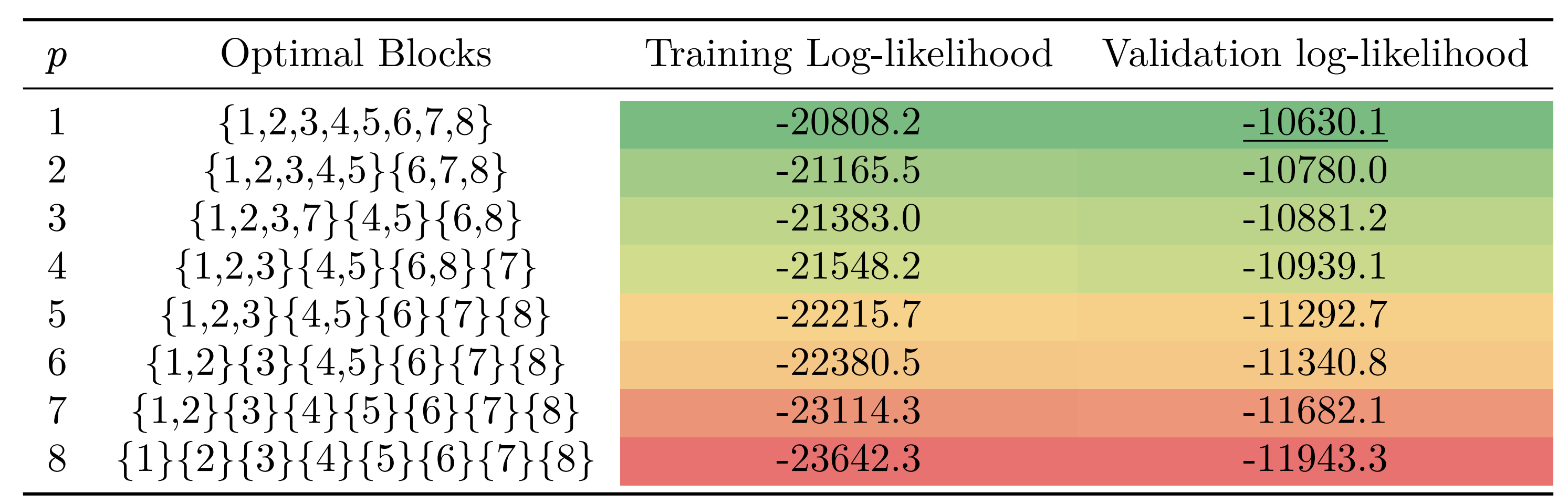}

\medskip

\caption*{(b) Diagonal matrix experiment}
\includegraphics[scale=0.28]{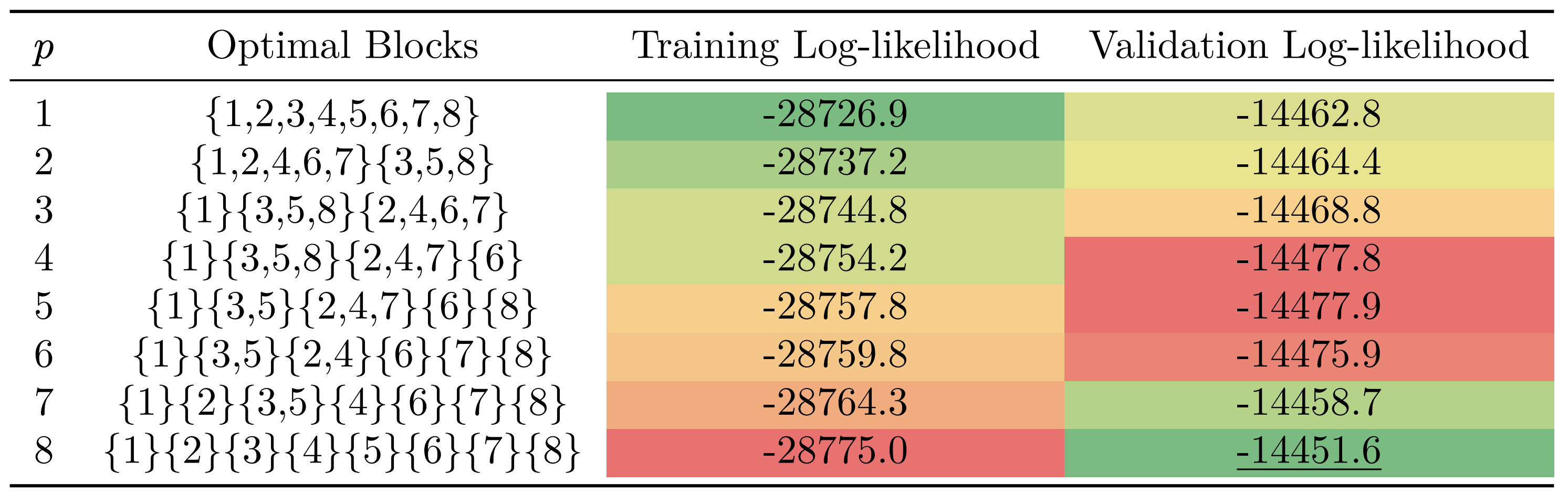}

\end{center}
\footnotesize
\renewcommand{\baselineskip}{11pt}
\textbf{Note:} {$p$ is the number of blocks in the covariance matrix. Sparsity increases with increasing number of blocks. The Optimal Blocks column shows the optimal covariance structure at regularization level $p$. The best validation log-likelihood for each experiment is underlined.}
\end{table}

\subsection{Empirical Application}

In this section we apply the MISC algorithm to the Apollo mode choice dataset from \cite{hess2019apollo}. $500$ individuals were presented with choices between four modes of transportation: car, bus, air and rail. The options were described on the basis of travel times (hours), travel costs (£), and access times (for the bus, air and rail options). Additionally for the air and bus modes, a categorical quality of service attribute was added. The quality of service attribute takes one of three levels: \textit{no frills}, \textit{wifi available}, or \textit{food available}. Each individual was presented with 14 stated preference tasks and each task had at least two of these four modes available.

A mixed logit model for the choice setup just described is specified according to utility equations (4.3)-(4.6). There is one equation for each alternative (four in total). $n$ and $m$ index individuals and stated preference tasks respectively. The dependent variable is the individual's choice of transportation mode.
\begin{align}
    U_{car,mn}&=C_{car}-e^{\beta_{tt,car}}\textnormal{Travel-Time}_{car,mn}-e^{\beta_{tc}}\textnormal{Travel-Cost}_{car,mn} +\epsilon_{car,mn}\\
    U_{bus,mn}&=C_{bus}-e^{\beta_{tt,bus}}\textnormal{Travel-Time}_{bus,mn}-e^{\beta_{tc}}\textnormal{Travel-Cost}_{bus,mn} \nonumber \\&-e^{\beta_{at}}\textnormal{Access-Time}_{bus,mn}+\epsilon_{bus,mn}\\
    U_{air,mn}&=C_{air}-e^{\beta_{tt,air}}\textnormal{Travel-Time}_{air,mn}-e^{\beta_{tc}}\textnormal{Travel-Cost}_{air,mn} +\beta_{no frills}\mathbbm{1}{\{no frills\}} \nonumber\\& +\beta_{wifi}\mathbbm{1}{\{wifi\}}+\beta_{food}\mathbbm{1}{\{food\}}-e^{\beta_{at}}\textnormal{Access-Time}_{air,mn}+\epsilon_{air,mn}\\
    U_{rail,mn}&=C_{rail}-e^{\beta_{tt,rail}}\textnormal{Travel-Time}_{rail,mn}-e^{\beta_{tc}}\textnormal{Travel-Cost}_{rail,mn} +\beta_{no frills}\mathbbm{1}{\{no frills\}}\nonumber\\& +\beta_{wifi}\mathbbm{1}{\{wifi\}}+\beta_{food}\mathbbm{1}{\{food\}}-e^{\beta_{at}}\textnormal{Access-Time}_{rail,mn}+\epsilon_{rail,mn}
\end{align}
The specification shown includes alternative-specific travel time coefficients and constants. The cost, access time, and level of service coefficients are shared across the alternatives. The coefficients of the model are assumed to be randomly distributed according to a normal distribution for which we estimate its mean and covariance matrix. The epsilon errors are $i.i.d$ extreme-valued with mean zero and unit scale. $C_{car}$ is normalized to zero for identification. The other alternative specific constants represent base preferences over the car alternative. Similarly, $\beta_{no frills}$ is normalized to zero and $\beta_{wifi}$ and $\beta_{food}$ measure the effects of additional quality of service over the \textit{no frills} option. Negation and exponentiation are used to ensure that the effects of time and cost on the utility are negative (e.g. the coefficient $\beta_{access}$ enters the utility equations as $-e^{\beta_{access}}$. This is equivalent to specifying a log-normal distribution for $\beta_{access}$). 

Table 6 shows the estimated parameters (mean and covariance matrix) corresponding to the optimal structure as determined by the MISC algorithm. Table 7 shows the output of the MISC algorithm for various levels of regularization. The specification corresponding to the best validation log-likelihood was chosen, and the corresponding estimated parameters are shown in Table 6. The optimal structure has four blocks, and allows covariances between the four alternative-specific travel time coefficients, the access time coefficient, the travel cost coefficient and the bus and rail modes alternative-specific constants. All other covariances are constrained to zero.

\begin{table}[ht]
\caption{Estimated parameters for the optimal covariance specification for the Apollo mode choice dataset.}
\begin{center}
\setlength\tabcolsep{3pt}
\begin{tabular}{ccccccccccccc}
\toprule
Coefficient & Mean& \multicolumn{11}{c}{Covariance}  \\
\midrule
 & & $C_{bus}$ & $C_{air}$& $C_{rail}$ & $\beta_{tt,car}$ & $\beta_{tt,bus}$ &$\beta_{tt,air}$& $\beta_{tt,rail}$ &$\beta_{at}$& $\beta_{wifi}$&$\beta_{food}$     & $\beta_{tc}$\\
 $C_{bus}$ & -1.23   & 5.54& 0 & -1.89 & 4.97   & 5.30  & 6.82     & 6.35  & 2.79  & 0 & 0 & 3.74  \\
         & (.419) & (1.35)& - &(.333)& (.730) & (.799) & (1.11) & (.947)& (.598) & -  & -  & (.514)\\
$C_{air}$&  .386   & 0 & .550& 0 & 0 & 0 & 0& 0 & 0& 0& 0& 0\\
        & (.153) & -& (.092)& - & - & - & - & -  & -& -& -& - \\
$C_{rail}$  &  -.428   & -1.89 & 0 & 1.04 & -2.02 & -2.11 & -2.71& -2.54& -1.11 & 0 & 0 & -1.45         \\
        & (.110) & (.333) & - & (.156)& (.243) & (.250)&(.330)& (.332)&(.202)& - & -& (.197)         \\
 $\beta_{tt,car}$ & -1.01 & 4.97&  0  & -2.02  & 5.44   & 5.46 &   7.19 & 6.70    & 2.93 & 0 & 0 & 4.05   \\
               & (.143)&(.730)& - & (.243) & (.470)& (.497) & (.694) & (.568) & (.439)  & - & -  & (.354)     \\
$\beta_{tt,bus}$ & -.365   & 5.30    & 0 & -2.11 & 5.46  & 5.98 & 7.50 & 6.98& 2.95& 0 & 0 & 4.05  \\
                & (.185) & (.799)  & -  & (.250) & (.497)& (.674)  & (.741) & (.601)& (.449)& - & -& (.414)\\
$\beta_{tt,air}$ & -1.97 & 6.82 & 0 & -2.71  & 7.19  &7.50 & 10.7  & 9.27 & 3.83 & 0 & 0  &5.36 \\
                & (.242) & (1.11) & -  &(.330)  & (.694)  & (.741) & (1.38)& (.915) & (.656) & -& - & (.468)     \\
 $\beta_{tt,rail}$ & -1.97 & 6.35  & 0& -2.54  & 6.70 & 6.98 & 9.27 & 8.87& 3.71 & 0& 0  & 4.94     \\
                & (.224) & (.947) & -  & (.332) & (.568) & (.601)& (.915) & (.823)& (.520)& -& -  & (.443)     \\
$\beta_{at}$ & -.032 & 2.79& 0  & -1.11 & 2.93& 2.95 &3.83& 3.71 & 2.16& 0    & 0 &2.13     \\
          & (.149) & (.598)  & -  & (.202) & (.439) & (.449)& (.656)& (.520)& (.487) & -& - & (.303)     \\
$\beta_{wifi}$& .908  & 0& 0 & 0 & 0  & 0 & 0 & 0 & 0 & .491& 0  & 0 \\
             & (.077) & - & -   & -  & -& -   & - & -& -  & (.086) & -   & -     \\
$\beta_{food}$&.436 & 0 & 0 & 0 & 0 & 0 & 0 & 0 & 0  & 0 & .431 & 0      \\
 & (.072) & - & -    & -   & -   & -& -   & -    & -   & - & (.086)   & -  \\
$\beta_{tc}$& -2.63&  3.74 & 0 & -1.45& 4.05  &4.05   & 5.36 & 4.94& 2.13& 0 & 0 &3.44   \\
         & (.096) & (.514)  & -& (.197) & (.354)  & (.414)  & (.468) & (.443) & (.303)  & - & - & (.339)  \\
\bottomrule
\end{tabular}

\end{center}
\footnotesize
\renewcommand{\baselineskip}{11pt}
\textbf{Note:} {The mixed logit model was reestimated on the full dataset after MISC model selection procedure. Parameter means and covariances were calculated from draws from convergent MCMC posterior draws from the mean vector and covariance matrix respectively. Standard errors are shown in brackets below the corresponding estimate.}
\noindent
\end{table}
\begin{table}[ht]
\caption{Output of the MISC algorithm for the Apollo mode choice dataset.}
\begin{center}\includegraphics[scale=0.26]{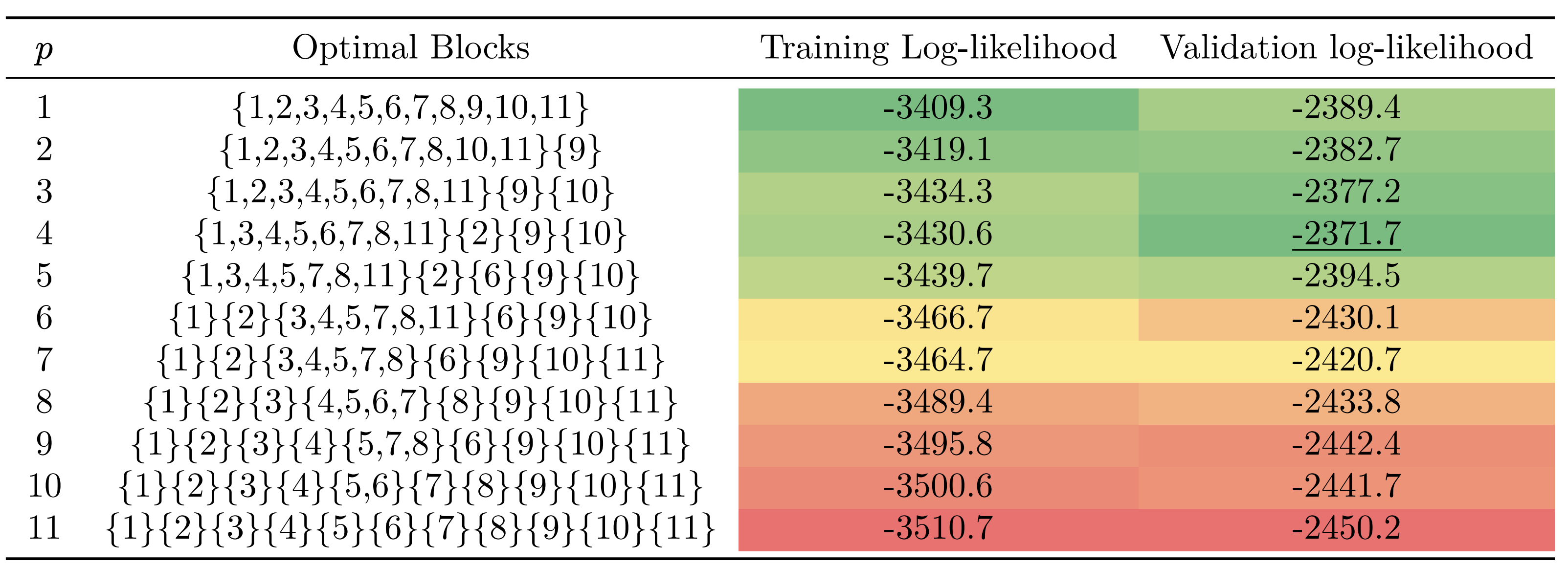}

\end{center}

\footnotesize
\renewcommand{\baselineskip}{11pt}
\textbf{Note:} {The indices shown correspond the coefficients column in Table 6. The dataset was randomly spliy into 300 individuals for training and 100 for validation. $p$ is a sparsity parameter corresponding to the minimum number of blocks in the covariance matrix. The best validation log-likelihood is underlined. The optimal block diagonal structure is obtained with $p=4$ blocks.}
\end{table}
The estimated means of the alternative-specific constants show that, \textit{ceteris paribus}, air and bus are the most and least preferred modes of transportation respectively. Travellers are also more sensitive to travel time by bus than by any other mode. The access time sensitivities outweighs, on average, the travel time sensitivities. Furthermore, the negative correlation between the alternative-specific constants of the bus and rail modes indicate that travellers who prefer to travel by rail tend to dislike travel by bus. The effects of increasing quality of service are positive, as expected, since we control for travel costs.

The estimated covariances indicate that the travel time sensitives for the different travel modes are positively correlated both among one another and with access time sensitivities, travel costs, and the preference for the bus travel mode. It appears that travellers who prefer the bus mode of travel are more sensitive to travel time and travel costs. We observe an opposite trend for travellers who prefer rail. These travellers tend to be less sensitive to travel times and travel costs. This is also reflected by the average travel time sensitivity for the rail travel mode being the lowest among the four travel modes.

Table 8 shows the calculated values of time. Travellers are more willing to spend to reduce their bus travel times than the travel times of the other available modes of transportation. Travellers are, however, most willing to spend to reduce the access times.

\begin{table}[ht]
\caption{Calculated values of time for the Apollo mode choice dataset.}
\begin{center}
\begin{tabular}{c c c}
\toprule
~                & \multicolumn{2}{l}{Value of time (£/hour)}  \\

                 & Median & Mean     
                 \\
                 \midrule
Car Travel Time  & 5.05    & 7.49                                    \\
Bus Travel Time  & 9.63    & 18.63                                    \\
Air Travel Time  & 1.93    & 10.70                                     \\
Rail Travel Time & 1.93    & 6.52                                     \\
Access Time      & 13.44   & 26.26     \\                              
\bottomrule
\end{tabular}
\end{center}

\footnotesize
\renewcommand{\baselineskip}{11pt}
\textbf{Note:} {The value of time is the extra amount of money, in £, that a traveller would be willing to pay to reduce their journey times by one hour. For an in depth demonstration of how values of time are calculated in logit mixture models see \cite{Xie2019}.}
\end{table}
\section{Extensions}
\setcounter{equation}{0}
In this section, we propose adaptations to the MISC algorithm to handle common extensions of the logit mixture model: random and fixed coefficients (Section 5.1), inter- and intra-individual heterogeneity (Section 5.2) and flexible mixing distributions (Section 5.3).
\subsection{Random and Fixed Parameters} %% (Youssef)
A simple extension to our procedure can be made to allow optimization problem (P) to choose which variances to estimate i.e., which coefficients are random and which are fixed.
This can be done by eliminating the following set of constraints from (P):
\begin{equation}
    z_{ii}=1 \; \forall i\in \mathcal{R},
\end{equation}
and by adjusting the equality constraint in constraint (3.16) to a less than or equal to constraint:
\begin{equation}
     \sum_{i \in \mathcal{K}} x_{ij}\leq 1 \; \forall j\in\mathcal{R},
\end{equation}
Any distributed coefficients that the modified optimization problem decides not to estimate variances for, are not included in step 1 or step 2 of the 3-step Gibbs sample (Algorithm 1), but are instead estimated through a separate Metropolis-Hastings algorithm step as in \cite{khondker2013bayesian}.
This extension is particularly useful when working with small datasets where a much more parsimonious specification is required for maximal efficiency. %% COMMENT: EXPLAIN WHY?

\subsection{Inter- and Intra-Individual Heterogeneity} %% (Mazen/Youssef)
When multiple observations are available for each individual, it is possible to identify inter- as well as intra-individual heterogeneity, representing random taste variations among different individuals and among different choice situations of the same individual respectively. \cite{becker2018bayesian} proposed an extension to the three-step Gibbs-sampler (Algorithm 1) of logit mixture models to account for both sources of heterogeneity. The underlying model assumes that the utility equation is given by the following:
\begin{equation}
U_{jmn} = V_{jmn}(\textbf{X}_{jmn}, \bm{\eta}_{mn})+\epsilon_{jmn}
\end{equation}

$\bm{\eta}_{mn}$ represents a vector of choice-specific coefficients that follow the distribution:

\begin{equation}
\bm{\eta}_{mn} \sim \mathcal{N}(\bm{\zeta}_n, \bm{\Omega}^w),
\end{equation}

and the individual-specific means $\bm{\zeta}_{n}$ are distributed as:

\begin{equation}
\bm{\zeta}_n \sim \mathcal{N}(\bm{\mu}, \bm{\Omega}^b)
\end{equation}

$\bm{\Omega}^b$ and $\bm{\Omega}^w$ are the inter- and intra-individual covariance matrices respectively. In such applications, the proposed methodology can be extended in two different ways to account for the two types of heterogeneity:

\begin{enumerate}
    \item  Estimating sparse covariances the inter- and intra-individual covariance matrices. The MISC algorithm can be applied separately to the two covariance matrices. 

    \item  The modeler might be interested in estimating three different types of coefficients: (1) fixed coefficients, (2) coefficients with inter-individual heterogeneity only, and (3) coefficients with inter- and intra-individual heterogeneity as in \cite{becker2018bayesian}. The extension presented in Section 5.1 can be used to distinguish between the three types of coefficients. 

\end{enumerate}

\subsection{Flexible Mixing Distributions} %% (Mazen)
The logit mixture model with normally distributed random coefficients can be extended to account for semi-parametric flexible mixing distributions. These distributions are specified as a finite mixture of normals; see  \cite{rossi2012bayesian}, \cite{bujosa2010combining}, \cite{greene2013revealing}, \cite{keane2013comparing}, and \cite{krueger2018dirichlet}. Semi-parametric distributions can overcome the major limitation of normal or lognormal mixing distributions, which is the assumption of uni-modality, as they can asymptotically mimic any shape; see \cite{vij2017random}.

The main limitation of this method is that the number of estimated coefficients is proportional to the number of ``classes" in the normal mixture. For example, a mixture of three normals entails the estimation of three covariance matrices. Our methodology can be applied to these models to enforce sparsity and reduce the number of estimated coefficients.
Let $\bm{\Omega}_{sd}$ denote draw $d\in\mathcal{D}$ for class $s\in\mathcal{S}$, $\pi_{sd}$ the fraction of the population in class $s$ in draw $d$, and $\bm{\Psi}_s$ the sparse representation of the covariance matrix for class $s$. We suggest an extension of the representation (3.2)-(3.5) to the multi-class case as follows:

 \begin{align}
    {\min_{\textbf{E}_{sd},\bm{\Psi}_s}} \; & \sum_{s \in \mathcal{S}}\sum_{d\in\mathcal{D}}\pi_{sd}{\Vert \textbf{E}_{sd} \Vert_2^2}\\
    {s.t. \;} & \sum_{s\in\mathcal{S}} \Vert \bm{\Psi}_s \Vert_0 \leq k\\
    & \bm{\Omega}_{sd}=\bm{\Psi}_s+\textbf{E}_{sd} \; \forall s\in \mathcal{S},  d\in \mathcal{D} \\
    & \bm{\Psi}_s \textnormal{ is pseudo block diagonal (PBD)} \; \forall s\in \mathcal{S}
\end{align} 
$k$ controls the sparsity level across all  $\bm{\Psi}_s$. Greater sparsity control can be achieved through an $s$-dimensional parameter $k_s \; \; s\in\mathcal{S} $ by replacing (5.7) with $\Vert \bm{\Psi}_s \Vert_0 \leq k_s \;\;\; s\in\mathcal{S}$. The clear downside being the much greater number of required runs of the MISC algorithm.
\section{Concluding Remarks}
This paper presents a new method of finding an optimal pseudo block diagonal (PBD) specification of the covariance matrix of the distributed coefficients in logit mixture models. The proposed algorithm, which we call MISC, marks a significant methodological improvement over the current \textit{modus operandi} of estimating either a full covariance matrix or a diagonal matrix. By working on PBD matrices, our method is permutation invariant in that it does not depend on the particular ordering of the distributed coefficients in the problem. The algorithm presented is an interplay between a mixed-integer optimization program and the standard MCMC three-step Gibbs-sampling procedure typically used to estimate logit mixture models. A mixed-integer program is used to find an optimal PBD covariance matrix structure for any desired sparsity level using MCMC posterior draws from the full covariance matrix. The optimal sparsity level of the covariance matrix is determined using out-of-sample validation. 

The proposed methodology is practical in that the main computational step can be completely parallelized. Furthermore, by controlling sparsity using the number of blocks in the matrix, the algorithm requires as many of the traditional MCMC runs used in logit mixture estimation as there are distributed coefficients. 

Unlike the Bayesian LASSO-based sparsity methods in the statistics literature, our method does not penalize the non-zero elements of the covariance matrix. This is desirable, in the logit mixture context, since penalizing the non-zero covariances may lead to underestimating the heterogeneity in the population under study. 

We have demonstrated the efficacy of our algorithm by applying it to a synthetic dataset where the correct block structure specification was successfully recovered. The algorithm was shown to be robust with respect to sample size. We demonstrated an empirical application to the Apollo mode choice dataset and presented a few practical extensions to our framework that are relevant for logit mixture models.  
  
    % references
    \bibliographystyle{chicago}
    
    \bibliography{references}

\end{document}